\def\d{\mathsf{d}}
\def\nb{\mathsf{n_b}}
\def\nd{\mathsf{n_d}}
\def\nr{\mathsf{n_r}}
\def\td{\mathsf{t_d}}
\def\xd{\mathsf{x_d}}
\def\tbin{\mathsf{t_{bin}}}
\def\te{\mathsf{t_e}}
\def\tr{\mathsf{t_r}}
\def\kd{\mathsf{k_d}}
\def\fbf{\mathbf{f}}
\def\hbf{\mathbf{h}}
\def\gbf{\mathbf{g}}
\def\ubf{\mathbf{u}}
\def\vbf{\mathbf{v}}
\def\xbf{\mathbf{x}}
\def\ybf{\mathbf{y}}
\def\zbf{\mathbf{z}}
\def\Pbf{\mathbf{P}}
\def\lambf{\boldsymbol{\lambda}}
\def\A{\mathrm A}
\def\XA{X_{\mathrm A}}
\def\XD{X_{\mathrm D}}
\def\LF{\mathrm{LF}}
\def\HF{\mathrm{HF}}
\def\FIA{\text{FI}_{\mathrm A}}
\def\FID{\text{FI}_{\mathrm D}}
\def\lambdab{\lambda_{\mathrm b}}
\def\lambdahatb{\widehat{\lambda}_{\mathrm b}}
\def\lambdas{\lambda_{\mathrm s}}
\def\Mod{\; \mathsf{mod} \;}
\def\grad{\nabla}
\def\lam_b{\lambda_b}
\def\Ti{T_{i}}
\def\Tii{T_{i+1}}
\def\Bhat{\widehat{B}}
\def\Bmin{B_{\rm min}}
\def\Bml{\widehat{B}^{\rm ML}}
\def\Lhat{\widehat{\Lambda}}
\def\Lml{\widehat{\Lambda}^{\rm ML}}
\def\Shat{\widehat{S}}
\def\Smin{S_{\rm min}}
\newcommand{\Frac}[2]{{{#1}/{#2}}}  % an "inert" form of \frac
\DeclareMathOperator*{\argmax}{arg\,max}
\newtheorem{proposition}{Proposition}
\newcommand{\diag}[1]{\text{diag} ( #1 )}
\begin{document}
\title{Dead Time Compensation for High-Flux Ranging}
\author{Joshua~Rapp, Yanting~Ma, Robin~M.~A.~Dawson, and~Vivek~K~Goyal% <-this % stops a space
\thanks{J.~Rapp and Y.~Ma contributed equally to this work.}
\thanks{J.~Rapp, Y.~Ma, and V.~K~Goyal are with the Department
of Electrical and Computer Engineering, Boston University, Boston,
MA, 02215 USA e-mail: yantingm@bu.edu.}% <-this % stops a space
\thanks{J.~Rapp and R.~M.~A.~Dawson are with The Charles Stark Draper Laboratory, Inc., Cambridge, MA 02139, USA.}
\thanks{This work was supported in part by a Draper Fellowship under Navy contract N00030-16-C-0014, the U.S. National Science Foundation under Grant No.\ 1422034 and Grant No.\ 1815896, and the DARPA REVEAL program under Contract No.\ HR0011-16-C-0030.}}

\maketitle

\begin{abstract}
Dead time effects have been considered a major limitation for fast data acquisition in various time-correlated single photon counting applications, since a commonly adopted approach for dead time mitigation is to operate in the low-flux regime where dead time effects can be ignored. 
Through the application of lidar ranging, this work explores
the empirical distribution of detection times in the presence of dead time and demonstrates that an accurate statistical model can result in reduced ranging error with shorter data acquisition time when operating in the high-flux regime.
Specifically, we show that the empirical distribution of detection times converges to the stationary distribution of a Markov chain.
Depth estimation can then be performed by passing the empirical distribution through a filter matched to the stationary distribution. 
Moreover, based on the Markov chain model, we formulate the recovery of arrival distribution from detection distribution as a nonlinear inverse problem and solve it via provably convergent mathematical optimization.  
By comparing per-detection Fisher information for depth estimation from high- and low-flux detection time distributions, we provide an analytical basis for possible improvement of ranging performance resulting from the presence of dead time.
Finally, we demonstrate the effectiveness of our formulation and algorithm via simulations of lidar ranging. 
\end{abstract}

\begin{IEEEkeywords}
Dead time,
high-flux ranging,
lidar,
Markov chain,
nonlinear inverse,
single-photon detection.
\end{IEEEkeywords}

\IEEEpeerreviewmaketitle

%%%%%%%%%%%%%%%%%%%%%%%%%
%%%%%%%%% body %%%%%%%%%%
%%%%%%%%%%%%%%%%%%%%%%%%%

\section{Introduction}
\label{sec:intro}

Time-correlated single photon counting (TCSPC) is a powerful technique for measuring the fast, time-dependent responses of actively illuminated systems. 
Commonly used for fluorescence lifetime imaging (FLIM)~\cite{Becker2005}, TCSPC has also been applied to optical quantum information applications~\cite{Hadfield2009}, light detection and ranging (lidar)~\cite{Massa1997}, and non-line-of-sight (NLOS) imaging~\cite{Buttafava2015,OToole2018}, among others~\cite{Altmann2018}.
TCSPC is particularly useful for lidar because the single-photon
sensitivity allows for lower-intensity signal returns, either from weaker illuminations or from distant, oblique-angled, or otherwise non-cooperative targets~\cite{Pellegrini2000}.
As a result, TCSPC lidar is one of the competing technologies currently being developed for commercial autonomous vehicles.

One of the main hardware limitations of photon counting is that the instrumentation has a dead time, a period of insensitivity after a photon detection during which arriving photons cannot be registered.
Typical TCSPC applications use a laser repeatedly pulsed with period $\tr$ and build up a histogram of detection times relative to the most recent illumination.
Without compensation for the effects of dead time, detection time histograms appear as distorted versions of the incident light intensity waveform, leading to erroneous depth estimates.
The usual approach to dealing with dead time is to limit the acquisition flux so that photons are detected in at most $5\%$ of illumination periods; this reduces the probability of photons arriving during the dead time, thus limiting the number of arrivals that are ``lost.''
However, attenuating the flux incident on a detector is an inefficient use of the light reflected from a scene and slows down the acquisition of sufficient photons for accurate ranging.
Allowing for higher incident flux and compensating for the resulting distortions due to dead time would enable faster acquisition without loss of accuracy and has been the subject of several recent works~\cite{Heide2018,Pediredla2018}, although their models of dead time 
assume a single detection can be recorded per illumination period, an assumption which does not necessarily hold for modern timing electronics.
Our aim is to accurately model the effects of dead time on the photon detection process so that photons can be detected at a much higher rate and distortions introduced due to dead time can be predicted and corrected.
Eventually, this approach should lead to the possibility of higher laser powers, shorter acquisition time, and more accurate depth estimation.

\subsection{Dead Time Characterization}
The source of the dead time and resulting behavior of the system may vary greatly depending on the implementation. 
Our work studies dead time correction for modern TCSPC systems with asynchronous electronics (such as the HydraHarp 400~\cite{Wahl2008} or TimeHarp 260~\cite{Wahl2013} from PicoQuant) and a nonparalyzable detector (e.g., PDM-series~\cite{Giudice2007} or Fast-gated SPADs~\cite{Buttafava2014} from Micro Photon Devices, operated in free-running mode). 
In the following, we formally define paralyzability, the sources of the dead time, and synchronization so as to clarify the dead time model we consider.

\textbf{Paralyzability.}
The dead times of event-counting detectors have been studied since at least the 1940s~\cite{Gnedenko1941,Kosten1943,Levert1943,Kurbatov1945}, with Feller first classifying detectors in terms of their paralyzability~\cite{Feller2015}.
\emph{Nonparalyzable (Type I)} detectors are dead for a fixed time $\td$ after a detection, regardless of whether additional photons arrive during the dead time.
On the other hand, when photons arrive during the dead time of a \emph{paralyzable (Type II)} detector, the dead time restarts and extends for at least another $\td$.
We consider only nonparalyzable detectors in our work.

\textbf{Source.}
TCSPC systems suffer dead times from various components, each with different behavior.
The timing electronics in \emph{classical} TCSPC systems (in both nonreversed and reversed start-stop modes) are only able to record a single detection time per illumination period~\cite{Becker2005}.
\emph{Modern} TCSPC electronics %systems 
allow for multiple detections per period, but the duration of the signal recording process still forces the timing electronics to be insensitive to additional photon arrivals~\cite{Arlt2013}.
In addition, the detectors themselves suffer from dead times.
For instance, the commonly used Single Photon Avalanche Diode (SPAD) detectors are reverse-biased photodiodes that are single-photon sensitive because they are operated above the breakdown voltage.
Incident photons cause an avalanche of carriers that is directly detectable as a precise digital signal, but which must be quenched in order to 
The time during which the SPAD is ``held-off'' is further extended to prevent afterpulses, which are avalanches caused by the release of trapped carriers from previous avalanches~\cite{Cova2013}. 
The total hold-off time
for active quenching circuits is thus a trade-off between shorter dead times versus lower afterpulsing 
probabilities; this parameter is often fixed in the detection circuitry design but may be left variable in some devices for tuning at the user's discretion (e.g.,~\cite{Buttafava2014}).
In this work, we assume that the hold-off 
time is sufficiently long such that any afterpulses can be considered indistinguishable from ambient detections and dark counts.\footnote{If the quenching time is too short, afterpulses can no longer be considered an independent Poisson process, as their occurrence is correlated with the previous detection time~\cite{Humer2015}.}

\textbf{Synchronization.}
The different sources of dead time further suggest two modes of TCSPC system operation.
We call systems \emph{synchronous} if they ensure that the end of a dead time is synchronized with the start of an illumination period.
Synchronous operation is often built into the hardware, such as in classical TCSPC systems or in the gated mode of fast-gated SPADs~\cite{Buttafava2014,Pediredla2018}.
In reversed start-stop mode, classical timing electronics may become active in the middle of an illumination period, but that recovery time is consistent, as the dead time is synchronized to the stop signal (either a delayed version of the current pulse or the next pulse).
Modern TCSPC electronics enable \emph{asynchronous} operation, in which there is no enforced synchronization between the dead time and the timing electronics.
If a photon is detected towards the end of a cycle and the dead time continues after the next laser excitation, there is no mechanism preventing the detector from becoming active in the middle of that cycle.
In other words, the end of the dead time is no longer dependent on the cycle synchronization, but on the arrival time of the most recently detected photon.
The synchronous and asynchronous architectures correspond to the ``clock-driven'' and ``event-driven'' SPAD recharge mechanisms, which were explored 
in~\cite{Antolovic2016,Antolovic2018}. 
While most existing work on the effect of dead time assumes synchronous operation, we consider only asynchronous systems in this work.

\subsection{Dead Time Compensation Approaches}
Yu and Fessler outline a number of general strategies for handling the effects of dead time~\cite{Yu2000}, with the simplest approach being to simply ignore the dead time.
Most commonly, the total photon flux at the detector is changed such that the dead time effects are actually negligible and can be ignored.
Since the effect of dead time is that photon arrivals within $\td$ of a detection are missed, a straightforward approach is to reduce the total photon flux, either by lowering the laser power and ambient light if possible, or by attenuating with a filter at the detector.
The suggestion of O'Connor and Phillips is to keep the fraction of excitations causing a detection to be at most 5\% 
to avoid dead time effects~\cite{OConnor1984}, a recommendation that electronics manufacturers have adopted.
Reducing the flux inevitably leads to longer time needed to acquire the same number of photons.
As a result, several recent works have focused on reducing the number of photons per pixel needed for accurate depth imaging by incorporating parameterized probability models of detection times and priors on the spatial structure of natural scenes~\cite{Kirmani2014,Shin2015,Altmann2016,Rapp2017}. 
Other approaches have tried to ignore dead time by changing the hardware setup, such as using multiple detectors so that there is more likely to be a detector not in the reset state when a photon arrives~\cite{Becker2005}.

Rather than attenuate the flux at the detector to avoid dead time effects, another strategy is to correct the distortions in the high-flux data after acquisition.
Most algorithm-based attempts at dead time compensation consider synchronous systems due not only to the systems that have historically been available, but also for the convenient property that detection times are statistically independent of each other in different cycles~\cite{Shin2015}.
One of the first methods for dealing with dead time in synchronous systems is that of Coates~\cite{Coates1968}.
Coates's basic algorithm was designed for lifetime measurement, with later work adapting the algorithm to include background subtraction~\cite{Davis1972}.
The basic principle of Coates's algorithm is that for any bin $i$ in a histogram, the detections in the preceding bins spanning $\td$ represent excitation cycles when no photon could have been detected in bin $i$ because the detector was dead.
The number of cycles in which the detector thus must have been dead is used to adjust computation of the photon arrival probability in each bin.
Recent work has rederived Coates's expression, which is the ML estimator for the number of photon arrivals in each bin of a histogram in a synchronous system, in order to include priors for maximum a posteriori (MAP) estimation~\cite{Pediredla2018}.
A few models \cite{Xu2016,Verma2017} consider histogram corrections for a hybrid of synchronous and asynchronous systems, which do allow for multiple statistically dependent detections per illumination cycle but without the dependency carrying over into different cycles.
A handful of papers address special cases of dead time effects in asynchronous systems: Antolovic et al. consider detection rate estimation for homogeneous arrival processes in~\cite{Antolovic2016,Antolovic2018}, whereas Cominelli et al. explore the special case when $\td$ equals an integer multiple of $\tr$ and no correction is needed~\cite{Cominelli2017}.
However, these approaches are insufficient to address the typical lidar acquisition mode with inhomogenous arrivals and in which $\td$ and $\tr$ cannot necessarily be arbitrarily adjusted.
The only work the authors are aware of that addresses asynchronous systems generally is that of Isbaner et al.~\cite{Isbaner.etal2016}, which effectively treats the detection process as a time-dependent attenuation of the arrival process intensity.
Although they model both the electronics and detector dead times $\te$ and $\td$, respectively, they note that such a system simplifies to having only one source of dead time when $\te<\td$.
They use an iterative procedure to estimate the attenuation, which is then used to correct the detection histogram.

The last strategy for dealing with dead time is to use the data as acquired but to incorporate dead time into the detection model.
In this vein, Heide et al. adjust their parameter estimation procedure to include dead time effects~\cite{Heide2018}.
However, the synchronous system assumption they use is technically only valid for their asynchronous timing electronics (PicoQuant PicoHarp 300,~\cite{Wahl2007}) if zero ambient light is present, which
guarantees that the detector will be reset for the next signal pulse.

\subsection{Other Related Work}
In addition to missing photons that arrive during a dead time, events can fail to be registered in the recording of point processes through other forms of ``counting loss,'' which depends on the measurement system design~\cite{Becker2005}.
A well-documented form of counting loss is ``pile-up,'' referring to the problem 
of the rising edge of a pulse overlapping with the tail of a previous pulse, such that the later pulse is not registered by a discriminator.
Each piled-up pulse prolongs the duration in which new events cannot be detected, making the discriminator a Type II detector.
Pile-up is present in some TCSPC system designs, such as those using passively-recharged SPADs~\cite{Antolovic2018} or hybrid photodetectors with negligible dead time~\cite{Patting2018}.
Several approaches correcting for pile-up have recently been proposed for nuclear spectrometry, in which  both the pulse times and energies are of interest~\cite{Trigano2005,Sepulcre2013,Trigano2017}.
Confusingly, the term ``pile-up'' is also sometimes used to describe the effect of dead time in synchronous TCSPC systems (e.g.,~\cite{Pediredla2018,Heide2018}) because the effect of earlier detections preventing later detections is similar.
Another form of counting loss, named Type III in~\cite{Yu2000}, may also occur in some systems (e.g.,~\cite{Pomme2015}) when two pulses occur close together and neither one gets recorded.

In lidar applications, the dead time-affected acquisition results in closer apparent distances, which has a similar effect to the intensity-dependent change in perceived depth known as ``range walk error''
~\cite{Palojarvi2005,Kurtti2009,Cho2014}.
Range walk is the result of using a discriminator to trigger in the leading edge of a signal pulse; a stronger signal with a steeper rising edge will be detected earlier than a weaker signal with a smaller slope.
Due to the similarity with high-flux ranging, approaches correcting for range walk error could be adapted to compensate for dead time.
Several optics-based methods aimed at range walk error correction attempt to experimentally measure and then correct for the bias in depth estimation. 
He et al.\ first calibrate the amount of range walk incurred as a function of the detection rate~\cite{He2013}.
Then the conventional depth estimation procedure is performed with the dead time-distorted data, and the range walk bias is subtracted off to correct the depth estimate.
Ye et al.\ use a similar method, except they first split the incident light with a 90:10 beamsplitter to two SPADs, using the lower-flux channel for simultaneous bias estimation to subtract off from the lower-variance high-flux estimate~\cite{Ye2018}.

\subsection{Main Contributions}
\subsubsection{Markov Chain Detection Time Model} 
We rigorously construct a Markov chain model to characterize the empirical distribution of detection times in asynchronous TCSPC systems.
Analyzing the stationary distribution of the Markov chain directly leads to a simple log-matched filter estimator for depth estimation.
\subsubsection{Arrival Intensity Reconstruction}
We derive a nonlinear inverse formulation for arrival intensity estimation from the detection distribution, where the formulation is based on the stationary condition of the Markov chain and the nonlinear inverse problem is solved by a provably convergent optimization algorithm; the estimated arrival intensity can then be used for depth estimation and other tasks.
\subsubsection{Accurate High-Flux Ranging}
Using our Markov chain-based methods, we show that depth estimation from high-flux detection data can achieve lower error than using low-flux data for the same acquisition time or can alternatively achieve the same error from much faster acquisitions. 
Furthermore, our methods outperform the method of~\cite{Isbaner.etal2016} applied to high-flux detection data.
\subsubsection{Demonstration of Dead Time Benefits}
By comparing the Fisher information per detected photon for depth estimation from the high- and low-flux detection time distributions, we show that when the background rate is low and the signal rate is sufficiently high, the presence of dead time may lead to improvement in ranging accuracy for a fixed number of detections. 
Moreover, when the dead time $\td$ is only slightly smaller than the illumination period $\tr$, such improvement can extend to higher background rate scenarios, since the dead time acts as a signal-triggered gate in this case.

\section{Empirical Distribution of Detection Times}
\label{sec:formulation}

The challenge of studying the detection time distribution for the asynchronous dead time model is that the detection times are statistically dependent. 
In this section, we show that the dependency is Markovian and provide the explicit transition probability density function (PDF). From the transition PDF, we can analyze the stationary condition and obtain the stationary distribution, from which our high-flux ranging algorithms are derived.

\subsection{Photon Arrival Process}
\label{subsubsec:arriv_process}

It is well known that photon arrival times at a detector are described by a Poisson process~\cite{Snyder.Miller2012}. 
For TCSPC, the repeated illumination with period $\tr$ makes the arrival process an inhomogeneous Poisson process with periodic intensity function $\lambda(t)$.
In general, $\lambda(t)$ is composed of two parts:
\begin{equation}
\lambda(t) = \lambdas(t) + \lambdab(t),
\label{eq:def_lam}
\end{equation}
where $\lambdas(t)$ is the time-varying intensity of a \emph{signal} process and $\lambdab(t)$ is the intensity due to \emph{background} (ambient light), which is assumed to be constant $\lambdab$ in this work.
For the application of ranging, $\lambdas(t)$ is often described parametrically in one period as the scaled and time-shifted illumination pulse $\lambdas(t) = \alpha \beta s(t-2z/c)$, 
where 
$\alpha$ is the target reflectivity, $z$ is the target  depth,
$\beta$ is a gain factor corresponding to the illumination power, $s(t)=\Frac{\exp(-\Frac{t^2}{2\sigma^2})}{\int_0^{\tr} \exp(-\Frac{\tau^2}{2\sigma^2}) \d \tau}$ is the Gaussian pulse shape approximation with half pulse width $\sigma$,
and $c$ is the speed of light.
Within one period, the signal photon arrival rate $S$ and background photon arrival rate $B$ are defined as $S := \int_0^{\tr} \lambdas(\tau) \, \d\tau$ and $B :=  \lambdab {\tr}$, respectively.
The total flux is given by $\Lambda:=S+B$, and the signal to background ratio is defined as $\text{SBR}:=S/B$.

\subsection{Markov Chain Model for Detection Times}
\label{subsec:markov}
If there were no dead time effects, the detection process would be equivalent to the arrival process, which is Poisson with intensity $\lambda(t)$; conditioned on the total number of detections, the absolute detection times would be order statistics of i.i.d. random variables with common probability density function $\propto\lambda(t)$~\cite[Section 2.3.3]{Snyder.Miller2012}.%
\footnote{Throughout the paper, \emph{absolute detection time} refers to the time when periodicity is not taken into consideration (i.e., with $\nr$ illumination cycles, we have $0<t_1\leq t_2,\ldots,\leq t_{N(t)} \leq \nr\tr$), whereas \emph{detection time} refers to the time of detection relative to the most recent illumination pulse, which is the absolute detection time modulo $\tr$.
Therefore, the limiting empirical distribution of detection times would be $\propto\lambda(t)$.
}
However, in the presence of dead time effects, the detection process is no longer Poisson, since the detection intensity, denoted by $\mu(t)$, is now a random process depending on the history of the detection process; such a detection process is referred to as a self-exciting process~\cite{Snyder.Miller2012}. 
Specifically, let $\{N(t):t\geq 0\}$ denote the detection process with (random) intensity $\mu(t)$, where $\{N(t):t\geq 0\}$ is characterized by the number of detections $N(t)$ at time $t$ and a sequence of absolute detection times $T_1,\ldots,T_{N(t)}$.
The conditional PDF of $T_{i+1}$ given $T_{1},\ldots,T_i$ is \cite[(6.15) (6.16)]{Snyder.Miller2012}
\begin{equation}
f_{T_{i+1}|T_1,\ldots,T_i}(t|t_1,\ldots,t_i) = \mu(t)\exp\!\left(-\int_{t_i}^{t}\mu(\tau) \, \d\tau\right).
\label{eq:con_density_T0}
\end{equation}
For a general self-exciting process, $\mu(t)$ can depend on the entire history of the process $\{N(\tau):0\leq \tau<t\}$. For the specific detection process considered in this work, we have
\begin{equation}
\mu(t)=\begin{cases}
\lambda(t), &\text{if } t> T_{N(t)} + \td; \\
0, &\text{if } T_{N(t)} < t \leq T_{N(t)}+\td,
\end{cases}
\label{eq:def_mu}
\end{equation}
where we introduce the notation $T_0 := -\infty$. We can see that $\mu(t)$ only depends on the latest detection time. Therefore, for the $\mu(t)$ defined in \eqref{eq:def_mu}, the RHS of \eqref{eq:con_density_T0} depends on $t_i$ but not on $t_1,\ldots,t_{i-1}$. That is, the absolute detection times form a Markov chain with transition PDF
\begin{equation}
   f_{T_{i+1}|T_i}(t|t_i) = \lambda(t)\exp\!\left(\!-\!\int_{t_i+\td}^{t}\!\!\!\lambda(\tau) \, \d\tau\!\right)\mathbb{I}\{t>t_i+\td\}, 
   \label{eq:con_density_T}
\end{equation}
where $\mathbb{I}$ is the indicator function.
An illustration of a realization of the detection process is shown in Fig.~\ref{fig:process_intensities}.

\begin{figure}
	\centering
	\includegraphics[width=\linewidth]{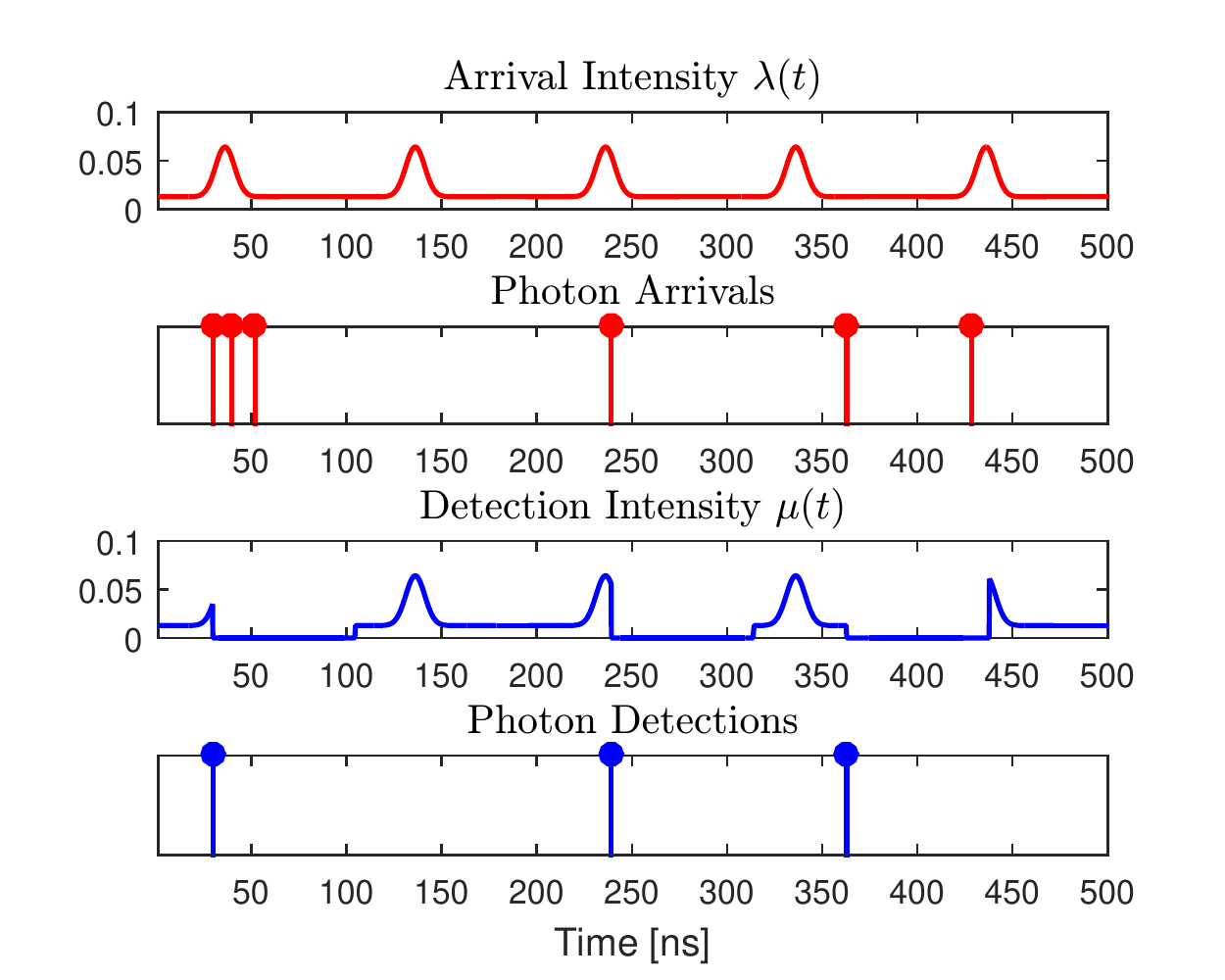}
	\caption{Illustration of the effect of dead times on the detection process for $\tr = 100$ ns, $\td = 75$ ns, $\sigma = 5$ ns, $S$ = 0.5, and $B$ = 1. 
	Photon arrival times are generated according to the arrival intensity $\lambda(t)$.
	The detection intensity $\mu(t)$ is equal to the arrival intensity $\lambda(t)$ except immediately following a photon detection when $\mu(t) = 0$, so the detection times are a subset of the arrival times and detection is not a Poisson process.}
	\label{fig:process_intensities}
\end{figure}

Define two sequences of random variables,
$\{K_i\}_{i\in\mathbb{N}}$ and $\{X_i\}_{i\in\mathbb{N}}$,
such that $K_i:= \lfloor T_i/\tr \rfloor$, where $\lfloor a\rfloor$ is the integer part of $a\in\mathbb{R}$, and $X_i:=T_i \Mod \tr$, hence $T_i = K_i\tr + X_i$.
That is, $K_i$ is the number of illumination periods before $T_i$ 
and $X_i$ is the location of absolute detection time $T_i$ within illumination period $K_i+1$, which is referred to as detection time in this paper.
Note that if there were no dead time effects, the empirical distribution of $X_i$'s would be identical to the arrival time PDF, given by
\begin{equation}
f_{\XA}(x)=\lambda(x)/\Lambda, \quad\text{for } x\in [0,\tr).
\label{eq:def_fXa}
\end{equation}
The following proposition provides statistical characterization of $\{X_i\}_{i\in\mathbb{N}}$ in the presence of dead time.

\begin{proposition}
Suppose that the photon arrival process is an inhomogeneous Poisson process with periodic intensity function $\lambda(t)$, whose period is $\tr$, and the detector has dead time $\td$. Define $\xd:=\td \Mod \tr$.
Let the random sequence $\{T_i\}_{i\in\mathbb{N}}$ denote absolute detection times and define detection times as $X_i:= T_i \Mod \tr$, for all $i\in\mathbb{N}$. Then the random sequence $\{X_i\}_{i\in\mathbb{N}}$ forms a Markov chain with state space $[0,\tr)$ and transition PDF
\begin{align}
&f_{X_{i+1}|X_i}(x_{i+1}|x_i) = \frac{\lambda(x_{i+1})}{1-\exp(-\Lambda)}\nonumber \\
&\qquad\exp\Bigg(-\displaystyle\int_{x_i + \xd}^{\Big\lceil \displaystyle\frac{x_i+\xd-x_{i+1}}{\tr} \Big\rceil \tr + x_{i+1}} \lambda(\tau) \, \d\tau \Bigg),
\label{eq:cond_pdf}
\end{align}
where $\lceil a \rceil := \lfloor a \rfloor + 1$ and $\Lambda:=\int_{0}^{\tr}\lambda(\tau) \, \d\tau$.
\label{prop_markov}
\end{proposition}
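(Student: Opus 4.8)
The plan is to transfer the Markov property of the absolute detection times $\{T_i\}$---whose transition PDF \eqref{eq:con_density_T} was already established---down to the ``wrapped'' variables $X_i = T_i \Mod \tr$, exploiting the $\tr$-periodicity of $\lambda$. Since a measurable function of a Markov chain need not itself be Markov, the crux is to show that the conditional law of $X_{i+1}$ given the whole past $T_1,\dots,T_i$ depends on $T_i$ only through $X_i$; once that is in hand, a routine tower-property argument upgrades ``$X_{i+1}$ conditioned on $T_i$'' to ``$X_{i+1}$ conditioned on $X_1,\dots,X_i$'' and identifies the transition kernel.

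First I would compute $f_{X_{i+1}\mid T_i}(x_{i+1}\mid t_i)$ for $x_{i+1}\in[0,\tr)$ by marginalizing \eqref{eq:con_density_T} over the countably many candidate periods, i.e.\ summing $\lambda(k\tr+x_{i+1})\exp(-\int_{t_i+\td}^{k\tr+x_{i+1}}\lambda)$ over integers $k$ with $k\tr+x_{i+1}>t_i+\td$; the interchange of summation and integration is justified by Tonelli since every term is nonnegative. Periodicity gives $\lambda(k\tr+x_{i+1})=\lambda(x_{i+1})$ and makes the integral of $\lambda$ over any window of length $\tr$ equal $\Lambda$, so writing $k_0$ for the smallest admissible index and splitting $\int_{t_i+\td}^{k\tr+x_{i+1}}=\int_{t_i+\td}^{k_0\tr+x_{i+1}}+(k-k_0)\Lambda$ turns the sum into a geometric series $\sum_{j\ge0}e^{-j\Lambda}=1/(1-e^{-\Lambda})$ (convergent since $\Lambda>0$), producing the prefactor $\lambda(x_{i+1})/(1-e^{-\Lambda})$ in \eqref{eq:cond_pdf}. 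It then remains to recast the surviving integral $\int_{t_i+\td}^{k_0\tr+x_{i+1}}\lambda$ in terms of $x_i$ and $\xd$: using $t_i=K_i\tr+x_i$ and $\td=\lfloor\td/\tr\rfloor\tr+\xd$ one has $t_i+\td=(K_i+\lfloor\td/\tr\rfloor)\tr+(x_i+\xd)$, so periodicity shifts the lower limit to $x_i+\xd$, and a short computation with the paper's convention $\lceil a\rceil:=\lfloor a\rfloor+1$ (which conveniently captures ``smallest integer strictly exceeding'') shows $k_0-(K_i+\lfloor\td/\tr\rfloor)=\lceil(x_i+\xd-x_{i+1})/\tr\rceil$, which is exactly the upper limit in \eqref{eq:cond_pdf}.

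Since the resulting expression depends on $t_i$ only through $x_i$, I would conclude as follows: because $X_{i+1}=T_{i+1}\Mod\tr$ is a function of $T_{i+1}$ and $\{T_i\}$ is Markov with a transition PDF \eqref{eq:con_density_T} that does not depend on $i$, $\mathbb{E}[\mathbb{I}\{X_{i+1}\in A\}\mid T_1,\dots,T_i]=\mathbb{E}[\mathbb{I}\{X_{i+1}\in A\}\mid T_i]$ is a fixed (in $i$) measurable function of $X_i$; projecting it onto $\sigma(X_1,\dots,X_i)$ and onto $\sigma(X_i)$ via the tower property yields the same function of $X_i$, so $\{X_i\}_{i\in\mathbb{N}}$ is a time-homogeneous Markov chain on $[0,\tr)$ with transition PDF \eqref{eq:cond_pdf}. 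I expect the main obstacle to be the floor/ceiling bookkeeping in the penultimate step---correctly pinning down $k_0$, handling the regime $\td>\tr$ in which $\lfloor\td/\tr\rfloor$ may be nonzero, and verifying that the measure-zero boundary case where $x_i+\xd-x_{i+1}$ is an exact multiple of $\tr$ is consistently absorbed by the convention $\lceil a\rceil:=\lfloor a\rfloor+1$; everything else is a routine application of periodicity and the geometric series.
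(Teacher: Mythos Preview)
Your proposal is correct and follows essentially the same route as the paper's proof: both exploit the Markov property of $\{T_i\}$ to reduce to conditioning on $T_i$ alone, sum over the countably many admissible periods via a geometric series using the $\tr$-periodicity of $\lambda$, and observe that the result depends on $t_i$ only through $x_i=t_i\Mod\tr$. The paper works at the level of the conditional CDF (computing it for a representative case $x_i+\xd\le x_{i+1}\le\tr$ and then differentiating) and passes from $\sigma(T_1,\dots,T_i)$ to $\sigma(X_1,\dots,X_i)$ by an explicit law-of-total-probability sum over $(k_1,\dots,k_i)$, whereas your direct PDF computation and tower-property argument are a slightly cleaner packaging of the same ideas.
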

\begin{proof}
See Appendix \ref{app_proof_markov}.
\end{proof}
We can check that $\{X_i\}_{i\in\mathbb{N}}$ is $\psi$-irreducible, recurrent, and aperiodic, and hence it has a unique stationary PDF \cite[Proposition 10.4.2]{Meyn.Tweedie2012}.
Denoting the stationary PDF by $f_{\XD}$, then for all $x\in [0,\tr)$, $f_{\XD}$ satisfies 
\begin{equation}
f_{\XD}(x) = \int_{0}^{\tr} f_{\XD}(y)f_{X_{i+1}|X_i}(x|y) \, \d y.
\label{eq:stationary_cond}
\end{equation}
That is, $f_{\XD}$ is the eigenfunction corresponding to eigenvalue 1 of the linear operator $\mathcal{P}$ defined as
\begin{equation}
\mathcal{P}f(x):=\int_{0}^{\tr} f(y)f_{X_{i+1}|X_i}(x|y) \, \d y.\label{eq:trans_operator}
\end{equation}
For the special case where $\xd=0$, we show in the following that the arrival PDF $f_{\XA}$ defined in \eqref{eq:def_fXa} satisfies the stationary condition, meaning that dead time does not cause any distortion in detection time distribution; this result has also been noted in~\cite{Cominelli2017} with a different derivation. With $f_{\XD}(x)=\lambda(x)/\Lambda$ and $\xd=0$, the RHS of \eqref{eq:stationary_cond} is
\begin{align*}
&\int_{0}^{\tr} \frac{\lambda(y)}{\Lambda}f_{X_{i+1}|X_{i}}(x|y) \d y\\
&= \int_{0}^{x} \frac{\lambda(y)}{\Lambda}\frac{\lambda(x)}{1-\exp(-\Lambda)}\exp\left(-\int_{y}^{x}\lambda(\tau)\d \tau\right) \\
&\qquad\qquad+ \int_{x}^{\tr} \frac{\lambda(y)}{\Lambda}\frac{\lambda(x)}{1-\exp(-\Lambda)}\exp\left(-\int_{y}^{\tr+x}\lambda(\tau)\d \tau\right) \\
&=\frac{\lambda(x)}{\Lambda}\Bigg[\frac{\int_{0}^{x}\lambda(y)\exp\left(-\int_{y}^{x}\lambda(\tau)\d \tau\right)\d y}{1-\exp(-\Lambda)}\\
&\qquad\qquad\qquad\qquad+\frac{\int_{x}^{\tr}\lambda(y)\exp\left(-\int_{y}^{\tr+x}\lambda(\tau)\d \tau\right)\d y }{1-\exp(-\Lambda)}\Bigg].
\end{align*}
Label the two terms in the square brackets as $A_1$ and $A_2$. Using the chain rule and the Leibniz rule for differentiation, for any constant $a$ that does not depend on $y$, we have that
\[\frac{\d}{\d y}\exp\left(-\int_{y}^{a}\lambda(\tau)\d \tau\right) = \lambda(y)\exp\left(-\int_{y}^{a}\lambda(\tau)\d \tau\right).\]
Letting $a=x$, we have 
\begin{equation*}
    A_1 =\frac{\exp\left(-\int_{y}^{x}\lambda(\tau)\d \tau\right)\Big\vert_0^x}{1-\exp(-\Lambda)}=\frac{1-\exp\left(-\int_{0}^{x}\lambda(\tau)\d \tau\right)}{1-\exp(-\Lambda)}.
\end{equation*}
Similarly, let $a=\tr+x$, then
\begin{equation*}
     A_2 =\frac{\exp\left(-\int_{0}^{x}\lambda(\tau)\d \tau\right)-\exp(-\Lambda)}{1-\exp(-\Lambda)}.   
\end{equation*}
It follows that $A_1+A_2=1$, and so  $\Frac{\lambda(x)}{\Lambda}$ is the stationary distribution of the Markov chain when $\xd=0$.

To numerically demonstrate the correctness of \eqref{eq:cond_pdf} for general $\xd$, 
we partition the state space $[0,\tr)$ into $\nb$ equally spaced time bins with bin centers $\{b_n\}_{n=1}^{\nb}$ and approximate the linear operator $\mathcal{P}$ defined in \eqref{eq:trans_operator} with an $\nb\times\nb$ matrix $\Pbf$, where $P_{m,n}:=f_{X_{i+1}|X_i}(b_n|b_m)$ with $f_{X_{i+1}|X_i}$ defined in \eqref{eq:cond_pdf}. 
The matrix $\Pbf$ is then normalized to have row sum equal to 1 so that it becomes a probability transition matrix $\widetilde{\Pbf}$. 
A discrete approximation of $f_{\XD}$, denoted by a length-$\nb$ row vector $\fbf_{\XD}$, is then obtained as the leading left eigenvector of $\widetilde{\Pbf}$, since $\fbf_{\XD}$ should satisfy the Markov chain stationary condition $\fbf_{\XD}= \fbf_{\XD} \widetilde{\Pbf}$. Moreover, if the second largest (in terms of magnitude) eigenvalue of $\widetilde{\Pbf}$ is strictly less than one, in other words, $\widetilde{\Pbf}$ admits a spectral gap, then the corresponding Markov chain converges to its stationary distribution geometrically fast. 
We have verified that in all parameter settings considered in this paper, $\widetilde{\Pbf}$ admits a spectral gap, thus confirming the convergence of the chain.
Finally, $\fbf_{\XD}$ is compared with the histogram of a set of simulated detection times, where we expect a close match between the simulated histogram and $\fbf_{\XD}$.
Detection times are simulated by first generating arrival times according to~\eqref{eq:def_lam}.
Then starting with detection of the first arrival time generated, subsequent arrivals are culled from the sequence if they are within $\td$ of the previous absolute detection time, as in~\cite{Hernandez2017}.
Note that unlike in~\cite{Hernandez2017}, both background photons and dark counts are considered to trigger dead times in the same manner as signal detections.

\begin{figure*}
\centering
    \begin{subfigure}{0.01\linewidth}
        \begin{sideways}
            	\text{\large \hspace{2mm}  $\mathbf{\mathsf{t_r} = 100}$ \hspace{15mm} $\mathbf{\mathsf{t_r}=80}$}
        \end{sideways}
    \end{subfigure} 
    ~
\begin{subfigure}{0.97\textwidth}
\includegraphics[width=\textwidth]{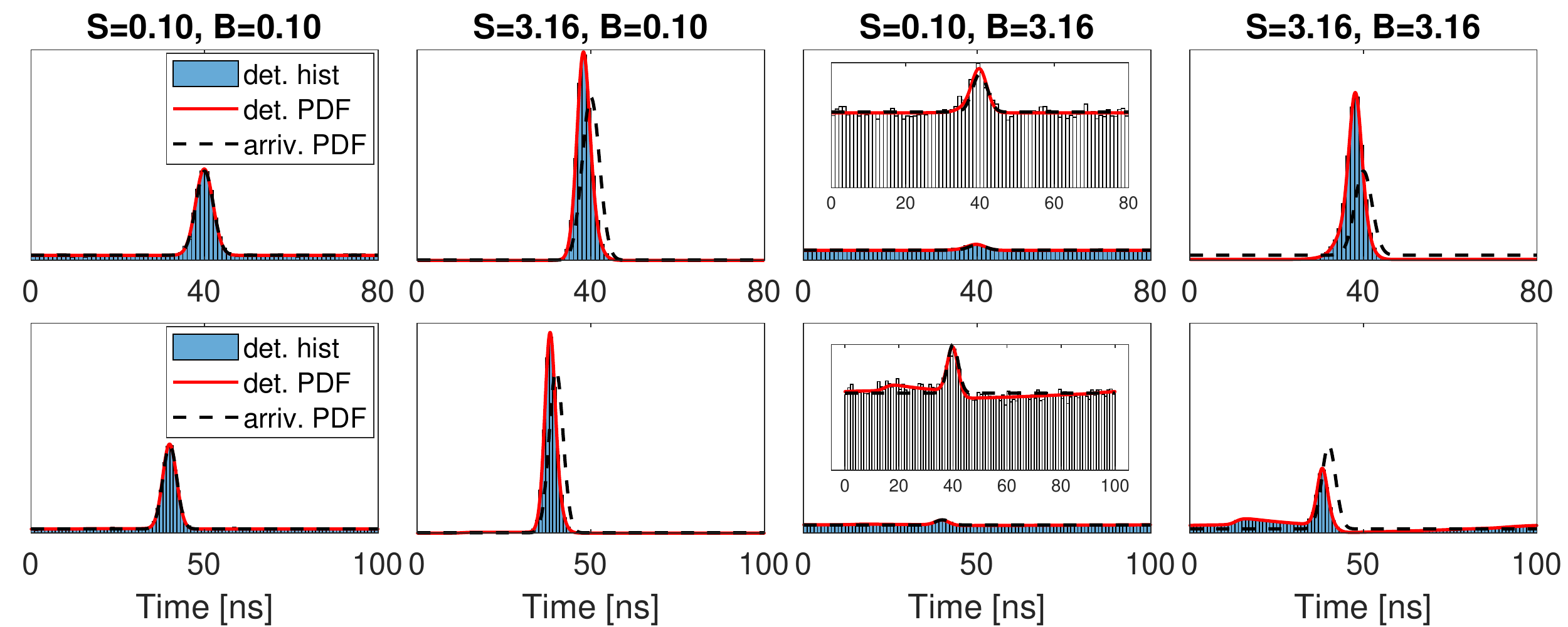}
\end{subfigure}
\caption{Comparisons between histograms of simulated detection times, predicted detection time PDFs, and arrival time PDFs illustrate how dead time affects the detection process. 
In addition to a shift in the mode toward earlier detection times, the dead time may also cause a ripple in the detection PDF relative to the arrival PDF.
Plots are shown for $\sigma=2$ ns, $\tbin = 50$ ps, $\nr=50000$, and $\td=75$ ns. 
The vertical axis scale is constant for each row.
An inset with a different vertical scale is included for each plot in the third column to emphasize the ripple that is not easily seen in the original scale.}

\label{fig:density_hist}
\vspace*{-3mm}
\end{figure*}

Comparisons between a histogram of detection times collected from simulation and the corresponding $\fbf_{\XD}$ are shown in Fig.~\ref{fig:density_hist}.
In each simulation, the number of illuminations is $\nr = 50000$ and the half pulse width is $\sigma = $ 2 ns.
The close matches between predicted detection PDFs and the simulated histogram results validate the effectiveness of the Markov chain model in deriving the limiting distribution.
The figure further illustrates the effect that dead time has on TCSPC\@.
The first column of Fig.~\ref{fig:density_hist} shows results with $S=B=0.1$, so the total flux $\Lambda$ is low enough that
few photons arrive during the detector dead time, and the arrival and detection densities are almost identical. 
If just the signal flux is increased, e.g., by increasing the illumination laser power (second column), the photon detection density narrows and shifts slightly toward earlier detection times (similar to the phenomenon of range walk error), due to early arrivals from the pulse blocking later photons from being detected. 
When the background flux increases, the distortions in the density due to dead time become more apparent.
However, these distortions also depend on the particular values of $\tr$ and $\td$.
When $\tr$ is slightly larger than $\td$ (such as for $\tr=80$ in the top row of Fig.~\ref{fig:density_hist}), 
the dead time triggered after a signal detection will reset just before signal photons from the next pulse arrive at the detector.
The dead time thus behaves as a signal-triggered gate, blocking detection of many background photons while allowing detection of additional signal.
On the other hand, increasing $\tr$ by just 20 ns (bottom row) causes a significant ripple in detection PDF a duration $\td$ after the main signal peak (modulo $\tr$).
The dead time is again often triggered by signal photons when $S$ is large, but the reset of the detector in the next cycle allows incident background photons to be detected, amplifying the apparent background intensity at that part of the cycle.
Note that this pre-pulse ripple could easily be mistaken for optical system inter-reflections or poor electronics thresholding if detector dead time were not taken into account.

\subsection{Comparison of Fisher Information}
\label{subsec:fisher}
High-flux acquisition enables detection of more photons than low-flux acquisition for a fixed number of illuminations. 
Although the detection time distribution is distorted in the sense that it is different from the arrival time distribution, our Markov chain model allows us to accurately predict the distortion. 
Therefore, it is expected that for a fixed number of illuminations, high-flux acquisition with our probabilistic model for detection times can improve ranging performance over the 5\% low-flux acquisition rule.
Another interesting aspect is to compare estimates from low-flux and high-flux acquisitions for a fixed number of detections. 
By comparing the arrival PDF, which is equivalent to the low-flux detection PDF, and the high-flux detection PDF in Fig.~\ref{fig:density_hist}, we notice that 
dead time 
results in a ``narrowing" of the pulse, especially for large $S$. 
Thus, we speculate that the distortion may in fact be favorable for depth estimation in some cases.

\begin{figure*}[t]
\centering
    \begin{subfigure}{0.01\linewidth}
        \begin{sideways}
            	\text{\normalsize B}
        \end{sideways}
    \end{subfigure} 
    ~
\begin{subfigure}{0.235\textwidth}
\includegraphics[width=\textwidth]{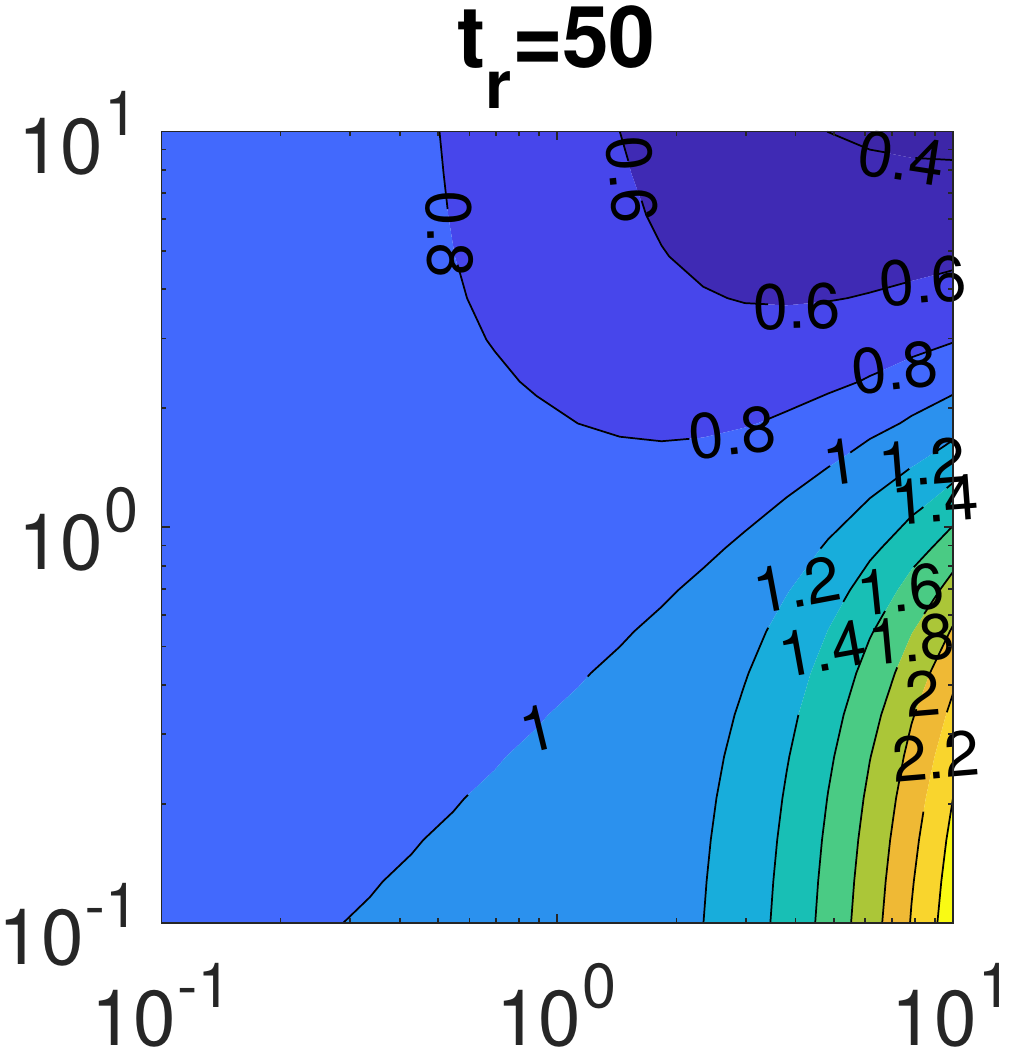}
\caption*{\normalsize S}
\end{subfigure}
\begin{subfigure}{0.235\textwidth}
\includegraphics[width=\textwidth]{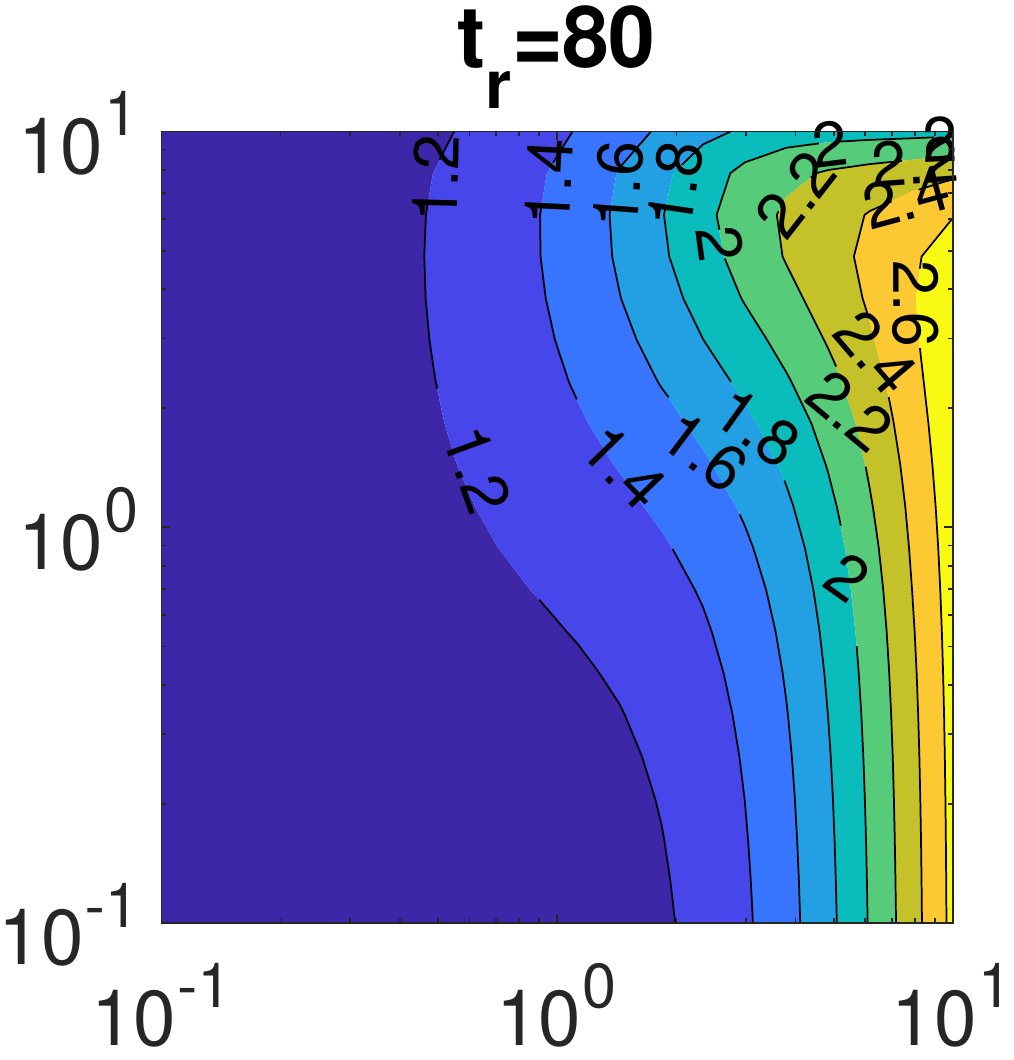}
\caption*{\normalsize S}
\end{subfigure}
\begin{subfigure}{0.235\textwidth}
\includegraphics[width=\textwidth]{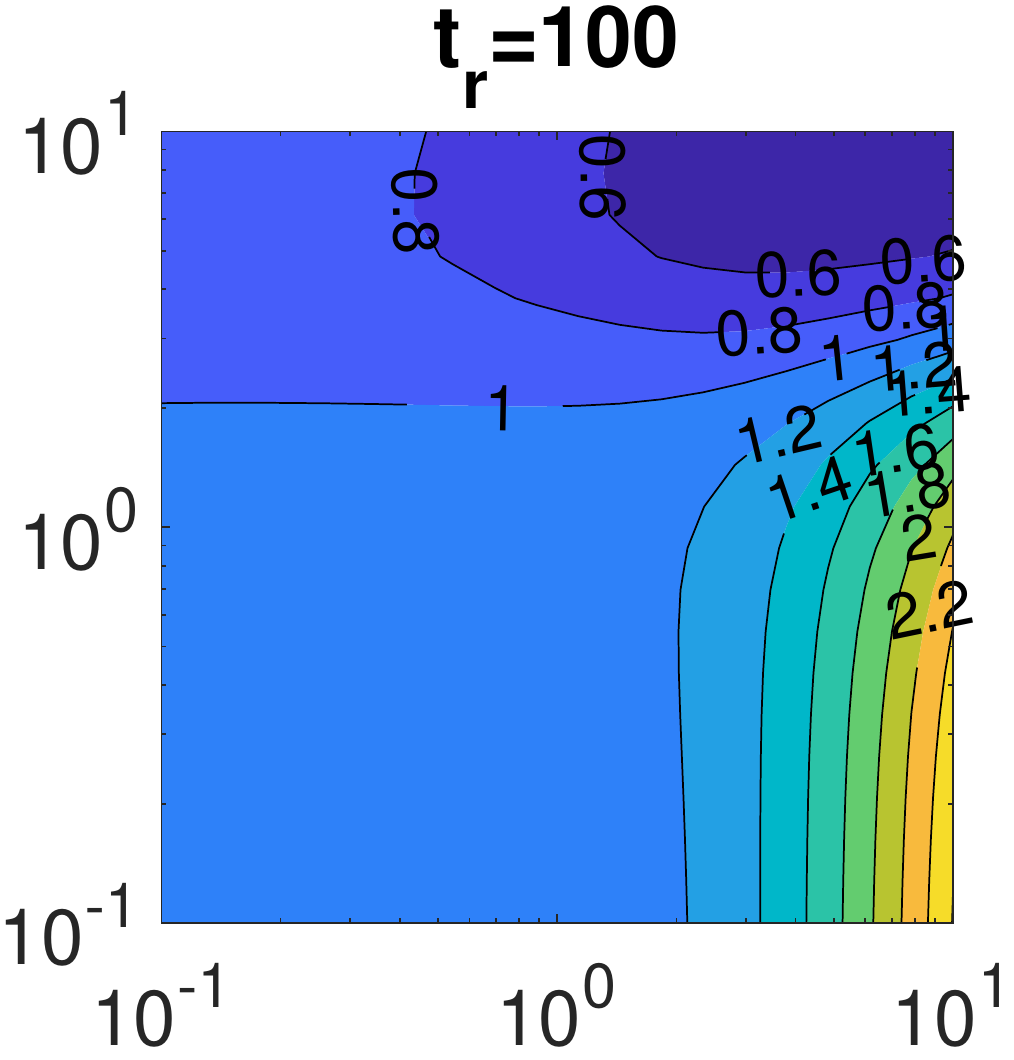}
\caption*{\normalsize S}
\end{subfigure}
\begin{subfigure}{0.235\textwidth}
\includegraphics[width=\textwidth]{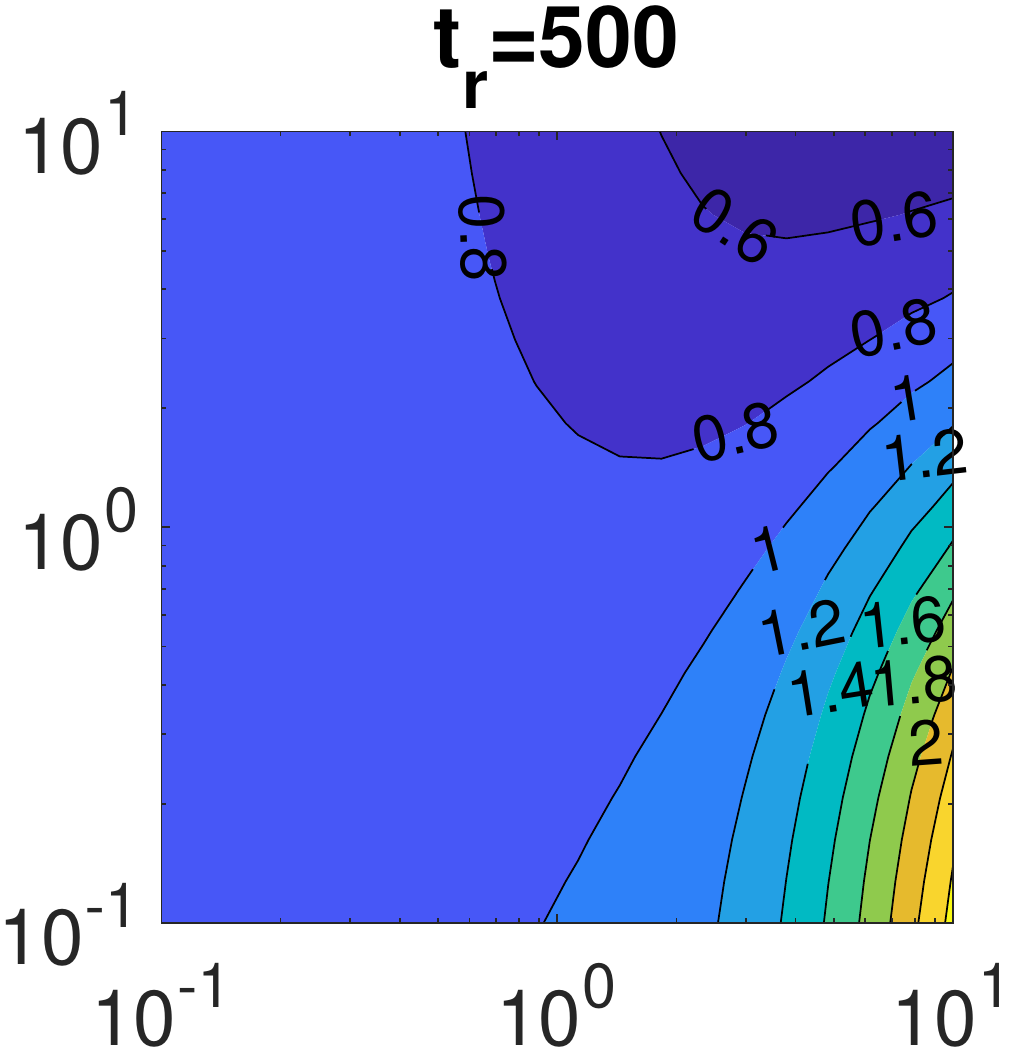}
\caption*{\normalsize S}
\end{subfigure}
\caption{The Fisher information ratio $\FID/\FIA$ indicates the performance improvement that may be gained for the same number of detections when high-flux data is used instead of low-flux data. The plots show for various signal rate $S$ and background rate $B$ and for $\sigma=0.2$ ns, $\tbin=10$ ps, and $\td=75$ ns that when SBR is sufficiently high and $B$ is not too large, the 
effect of dead time is beneficial for range estimation.}
\label{fig:FisherInfo}
\end{figure*}

To verify this somewhat counter-intuitive speculation, we compare the Fisher information per detection for estimating the depth $z$ from the low-flux PDF $f_{\XA}$~\eqref{eq:def_fXa} and high-flux PDF $f_{\XD}$ (stationary distribution of the Markov chain defined in Proposition~\ref{prop_markov}),\footnote{The reciprocal of the Fisher information is a lower bound for the mean squared error (MSE) of any unbiased estimator; this is also known as the Cram{\'e}r-Rao inequality~\cite[Theorem 3.1]{Kay:93v1}. 
} which are denoted by $\FIA$ and $\FID$, respectively, and are computed as (derivation is provided in Appendix~\ref{app_fisher_info})
\begin{equation*}
\begin{split}
\FIA &= \int_{0}^{\tr} \left(\frac{\partial }{\partial z}f_{\XA}(x;z)\right)^{\!2} \frac{1}{f_{\XA}(x;z)} \, \d x, \\
\FID &= \int_{0}^{\tr} \left(\frac{\partial }{\partial z}f_{\XD}(x;z)\right)^{\!2} \frac{1}{f_{\XD}(x;z)} \, \d x,
\end{split}
\end{equation*} 
where the derivative of $f_{\XD}$ is computed numerically. Note that while realizations of detection times are not i.i.d. samples of $f_{\XA}$ or $f_{\XD}$, most ranging algorithms only use the empirical distribution of detection times for depth estimation. Therefore, it is reasonable to consider Fisher information of the limiting empirical distributions $f_{\XA}$ and $f_{\XD}$ rather than that of the joint distributions.
Fig.~\ref{fig:FisherInfo} presents the Fisher information ratio $\FID/\FIA$ for $\td=75$ ns and with $\tr$ varying from 50 to 500 ns. 
By \eqref{eq:cond_pdf}, we notice that only $\xd:=\td \mod \tr$ affects the detection time distribution. 
Hence, for the case where $\tr=50$ ns, the effective dead time in terms of detection time distribution is 25 ns.
In the regions where the ratio is greater than one, $f_{\XD}$ is more informative about the depth $z$ than $f_{\XA}$ (i.e., the dead time effect is beneficial) in the sense that the per-detection Fisher information is higher.
We notice that such a region usually appears when $B$ is not too large and SBR is sufficiently high. 
A potential reduction in depth error variance was likewise noted by Heide et al.~\cite{Heide2018}, but that analysis assumed zero background, which is a na\"ive assumption for most applications and which our analysis shows is not a necessary condition for dead time to be beneficial.
When $\tr$ is slightly larger than $\td$ (as for $\tr=80$ in Fig.~\ref{fig:FisherInfo}), 
the signal-triggered gating extends the region in which dead time is beneficial to larger $B$ compared to the cases where $\tr$ is much larger than $\td$.
Together with the plots in Fig.~\ref{fig:density_hist}, this suggests that 
the most photon-efficient benefit from dead time is achieved when $\tr$ is slightly larger than $\td$.\footnote{Note that more photons would be detected with shorter $\td$, but each detection would likely be less informative of the depth.}
This condition may be difficult to achieve in practice as the dead time is not tunable in many devices and adjustment of the illumination period is limited by the required maximum unambiguous range.

\section{Arrival Intensity Estimation Algorithm}
\label{sec:algo}

In this section, we first derive a maximum likelihood (ML) estimator for estimating the total flux $\Lambda$ from absolute detection times. Then we derive an algorithm for estimating the arrival intensity $\lambda(x)$ for $x\in [0,\tr)$ from a histogram of detection times assuming that $\Lambda$ is known; one may implement our algorithm with a calibrated $\Lambda$ when available or with a $\Lambda$ estimated by, for example, our ML estimator.

\subsection{Maximum Likelihood Estimator for $\Lambda$}
\label{subsec:Lam_est}
In~\cite{Isbaner.etal2016}, Isbaner et al. note the necessity of estimating $\Lambda$ in order to correctly reconstruct the histogram of photon arrival times.
Define the interdetection period $R_i$ as the number of completed periods after the detector reset at $\Ti+\td$ before another photon is detected at time $T_{i+1}$:
\begin{equation}
    R_i := \left \lfloor \frac{\Tii-(\Ti+\td)}{\tr} \right  \rfloor,
    \label{eq:K_def}
\end{equation}
where $T_i$'s are absolute detection times. 
Isbaner et al. claim that $P(R_i=r) \propto \exp(-r\Lambda)$ and use weighted least squares to fit an exponential function.
In the following proposition, we verify the claim using properties of Poisson processes and the Markov nature of detections with dead time. 
Moreover, we show that $R_i$'s are independent, and so an ML estimator for $\Lambda$ can be easily computed from a realization of $R_i$'s.

\begin{proposition}
\label{prop:K_iid}
The random variables $R_i$'s defined in \eqref{eq:K_def} are i.i.d.\ with the same probability distribution as $R$, where
\begin{equation}
    P(R=r)=\left(1-\exp(-\Lambda)\right)\exp(-r\Lambda),\hspace{2mm} r\in\{0\}\cup\mathbb{N}.\label{eq:K_pdf}
\end{equation}
\end{proposition}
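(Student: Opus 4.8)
The plan is to use the self-exciting structure of the detection process to identify $R_i$ with a waiting time of the underlying Poisson process, and then to combine a Poisson void-probability computation with the Markov property to obtain both the distribution and the independence. Observe from \eqref{eq:def_mu} that immediately after the $i$th detection the detector is dead on $(T_i,\,T_i+\td]$ and thereafter has intensity $\lambda(t)$; hence, writing $s:=T_i+\td$, the next detection time $T_{i+1}$ is exactly the first arrival of the inhomogeneous Poisson arrival process strictly after $s$, so that $R_i=\lfloor (T_{i+1}-s)/\tr\rfloor$ counts the number of complete illumination periods that elapse after the reset at $s$ before the next arrival. (That detection $i+1$ occurs, so that $R_i$ is well defined, holds almost surely because $\{X_i\}$ is recurrent, as noted after Proposition~\ref{prop_markov}; an arrival exactly at $s$ has probability zero and is ignored.)

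Next I would compute the conditional law of $R_i$ given $T_i=t_i$. For an integer $r\ge 0$, the event $\{R_i\ge r\}$ is precisely the event that the Poisson process has no arrival in $(s,\,s+r\tr]$, which by the Poisson void probability has conditional probability $\exp\!\big(-\int_{s}^{s+r\tr}\lambda(\tau)\,\d\tau\big)$. Periodicity of $\lambda$ is the crucial ingredient: any integration window of length $\tr$ contributes exactly $\Lambda:=\int_0^{\tr}\lambda(\tau)\,\d\tau$, so $\int_{s}^{s+r\tr}\lambda(\tau)\,\d\tau=r\Lambda$ independently of $s$, giving $P(R_i\ge r\mid T_i=t_i)=\exp(-r\Lambda)$. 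Subtracting consecutive terms yields $P(R_i=r\mid T_i=t_i)=\exp(-r\Lambda)-\exp(-(r+1)\Lambda)=(1-\exp(-\Lambda))\exp(-r\Lambda)$, which is \eqref{eq:K_pdf} and, the key point, does not depend on $t_i$.

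Finally I would turn this into the i.i.d.\ statement. Since the conditional distribution of $R_i$ given $T_i$ does not depend on the value of $T_i$, and since by \eqref{eq:con_density_T} the conditional law of $T_{i+1}$ given $T_1,\dots,T_i$ depends only on $T_i$, it follows that $R_i$ is independent of $\sigma(T_1,\dots,T_i)$ with distribution \eqref{eq:K_pdf}. As $R_1,\dots,R_{i-1}$ are deterministic functions of $T_1,\dots,T_i$ and hence $\sigma(T_1,\dots,T_i)$-measurable, $R_i$ is independent of $(R_1,\dots,R_{i-1})$ for every $i$; together with the common law \eqref{eq:K_pdf} this shows the $R_i$ are i.i.d. I do not anticipate a serious obstacle; the only point needing care is the identification of $T_{i+1}$ with the first post-reset Poisson arrival --- really the strong Markov property of the Poisson process applied at the reset time $s=T_i+\td$ --- which I would make rigorous simply by conditioning on $T_i=t_i$, since then $s$ equals the deterministic constant $t_i+\td$ and the ordinary Markov property suffices.
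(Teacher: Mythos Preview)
Your argument is correct and rests on the same two ingredients as the paper's proof: showing that the conditional law of $R_i$ given $T_i=t_i$ is \eqref{eq:K_pdf} independently of $t_i$, and then using the Markov property of $\{T_j\}$ to upgrade this to the i.i.d.\ statement. The execution differs in two small ways. First, the paper computes $P(R_i=r\mid T_i=t_i)$ by integrating the transition density \eqref{eq:con_density_T} directly over $[a_i,a_i+\tr]$ with $a_i=r\tr+t_i+\td$, whereas you bypass that integral by reading off the tail $P(R_i\ge r\mid T_i=t_i)$ as a Poisson void probability and differencing; your route is a bit more conceptual and makes the role of periodicity explicit. Second, for independence the paper verifies that the joint law of $(R_{i-1},R_i)$ factorizes and then asserts the general case follows similarly, while your sequential argument---$R_i\perp\sigma(T_1,\dots,T_i)\supseteq\sigma(R_1,\dots,R_{i-1})$---handles all indices at once and is the cleaner way to package the same observation.
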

\begin{proof}
See Appendix~\ref{app_proof_K_iid}.
\end{proof}
By Proposition~\ref{prop:K_iid}, given a realization of interdetection periods  $\{r_i\}_{1\leq i\leq n}$, the log-likelihood function is 
\begin{equation*}
 \mathcal{L}(\{r_i\}_{i=1}^n;\Lambda)=-\Lambda\sum_{i=1}^n r_i + n\ln\left(1-\exp(-\Lambda)\right). 
\end{equation*}
Setting the derivative of $\mathcal{L}(\{r_i\}_{i=1}^n;\Lambda)$ with respect to $\Lambda$ to zero, we obtain the ML estimator for $\Lambda$ as
\begin{equation}
    \Lml = -\ln \left (\frac{\sum_{i=1}^n r_i}{n+ \sum_{i=1}^n r_i} \right ).
    \label{eq:ML_Lam}
\end{equation}
Note that the distribution of $R$ can be understood as follows. 
The number of photon arrivals per period is Poisson with parameter $\Lambda$, so $p = 1-\exp(-\Lambda)$
is the probability of at least one photon arriving in a period. Then $R$ has a geometric distribution $P(R=r) = (1-p)^r p$, which matches \eqref{eq:K_pdf}. 

In what follows, we consider estimating the arrival intensity $\lambda$ from a detection time histogram assuming that $\Lambda$ is known. 

\subsection{Relationship between Arrival and Detection Distributions}
Plugging \eqref{eq:cond_pdf} into \eqref{eq:stationary_cond}, we have 
\begin{align}
f_{\XD}(x) &= \lambda(x)\Bigg[\int_0^{x-\xd} f_{\XD}(y)\frac{\exp\!\left(-\int_{y + \xd}^{x} \lambda(\tau) \, \d\tau \right)}{1-\exp(-\Lambda)} \, \d y\nonumber\\
&\quad + \int_{x-\xd}^{\tr} f_{\XD}(y)\frac{\exp\!\left(-\int_{y + \xd}^{\tr + x} \lambda(\tau) \, \d\tau \right)}{1-\exp(-\Lambda)} \, \d y\Bigg] \label{eq:stationary_cond1}
\end{align}
for $x > \xd$, and
\begin{align}
f_{\XD}(x) &= \lambda(x)\Bigg[\int_0^{\tr + x - \xd} f_{\XD}(y)\frac{\exp\!\left(-\int_{y + \xd}^{\tr + x} \lambda(\tau) \, \d\tau \right)}{1-\exp(-\Lambda)} \, \d y\nonumber\\
&\quad + \int_{\tr + x - \xd}^{\tr} f_{\XD}(y)\frac{\exp\!\left(-\int_{y + \xd}^{2\tr + x} \lambda(\tau) \, \d\tau \right)}{1-\exp(-\Lambda)} \, \d y\Bigg]
\label{eq:stationary_cond2}
\end{align}
for $x \leq \xd$.
In \eqref{eq:stationary_cond1} and \eqref{eq:stationary_cond2}, denote the factors in the brackets as $a(x)$ and we can then write $f_{\XD}(x)=\lambda(x)a(x)$, where $a(x)$ can be interpreted as the attenuation effect on the arrival intensity due to dead time.
It is worth mentioning that similar factorization of $f_{\XD}$ was also used in Isbaner et al. \cite{Isbaner.etal2016} for the derivation of their dead time correction algorithm. However, such a factorization is assumed at the beginning of their derivation, whereas we arrive at this factorization naturally from the stationary condition of a Markov chain. 

Plugging $f_{\XD}(x)=\lambda(x)a(x)$ into \eqref{eq:stationary_cond1}, we have
\begin{align}
a(x) &= \int_0^{x - \xd} \lambda(y)a(y)\frac{\exp\!\left(-\int_{y + \xd}^{x} \lambda(\tau) \, \d\tau \right)}{1-\exp(-\Lambda)} \, \d y \nonumber\\
&\quad + \int_{x - \xd}^{\tr} \lambda(y)a(y)\frac{\exp\!\left(-\int_{y + \xd}^{\tr + x} \lambda(\tau) \, \d\tau \right)}{1-\exp(-\Lambda)} \, \d y.
\label{eq:sx}
\end{align}
Differentiating both sides of the above equation with respect to $x$:
\begin{align}
a'(x)\overset{(a)}{=} & \frac{1}{1-\exp(-\Lambda)}\Bigg[ \lambda(x -\xd) a(x - \xd) \left(1-\exp(-\Lambda)\right)\nonumber\\
& \ -\lambda(x)\bigg(\int_0^{x - \xd} \lambda(y)a(y)\exp\!\left(-\int_{y + \xd}^{x} \lambda(\tau) \, \d\tau \right) \d y \nonumber\\
& \ \ \quad + \int_{x - \xd}^{\tr} \lambda(y)a(y)\exp\!\left(-\int_{y + \xd}^{\tr + x} \lambda(\tau) \, \d\tau \right) \d y \bigg)\Bigg]\nonumber\\
\overset{(b)}{=} & \lambda(x-\xd) a(x-\xd) - \lambda(x)a(x),
\label{eq:diff_s}
\end{align}
where step $(a)$ uses the Leibniz rule and the fact that $\lambda(\tr + x) = \lambda(x)$ and step $(b)$ follows by noticing from \eqref{eq:sx} that the sum of the two integrals equals $(1-\exp(-\Lambda))a(x)$.
Similarly, we can obtain from \eqref{eq:stationary_cond2} that 
\begin{equation}
a'(x) = \lambda(\tr + x-\xd) a(\tr + x - \xd) - \lambda(x)a(x).
\label{eq:diff_sp}
\end{equation}
Note that if we consider periodic extensions of $a(x)$ and $f_{\XD}(x)$, then \eqref{eq:diff_s} and \eqref{eq:diff_sp} are identical. In the following, $a(x)$ and $f_{\XD}(x)$ are considered as their periodic extensions.

Integrating both sides of \eqref{eq:diff_s}, we have that
\begin{equation}
a(x) = - \int_{x - \xd}^x \lambda(\tau)a(\tau) \, \d\tau + C,
\label{eq:delayDiffInt}
\end{equation}
where $C$ is a constant. Multiplying both sides of \eqref{eq:delayDiffInt} by $\lambda(x)$, we have that
\begin{equation}
f_{\XD}(x) = - \lambda(x)\int_{x - \xd}^x f_{\XD}(\tau) \, \d\tau + C\lambda(x).
\label{eq:ftgtC}
\end{equation}
Define
\begin{equation}
g(x):= \int_{x - \xd}^x f_{\XD}(\tau) \, \d\tau.
\label{eq:def_g}
\end{equation}
Since $f_{\XD}(x)$ is a proper probability density function on the interval $[0, \tr)$,  it satisfies
\begin{equation*}
1 = \int_0^{\tr} f_{\XD}(x) \, \d x
  = -\int_{0}^{\tr} \lambda(x)g(x) \, \d x + C\int_{0}^{\tr} \lambda(x) \d x.
\end{equation*}
It follows that
\begin{equation}
C = \frac{1+\int_{0}^{\tr} \lambda(x)g(x) \, \d x}{\int_{0}^{\tr} \lambda(x) \, \d x}=\frac{1+\int_{0}^{\tr} \lambda(x)g(x) \, \d x}{\Lambda}.
\label{eq:C}
\end{equation}
Plugging \eqref{eq:def_g} and \eqref{eq:C} into \eqref{eq:ftgtC}, we have the following relationship between the arrival intensity function $\lambda(x)$ and the limiting distribution of the detection times $f_{\XD}(x)$:
\begin{equation}
f_{\XD}(x) = - \lambda(x)g(x) + \frac{1+\int_{0}^{\tr} \lambda(x)g(x) \, \d x}{\Lambda}\lambda(x).
\label{eq:inverseProb_c}
\end{equation}

\begin{figure*}[t!]
\centering
\includegraphics[width=\textwidth]{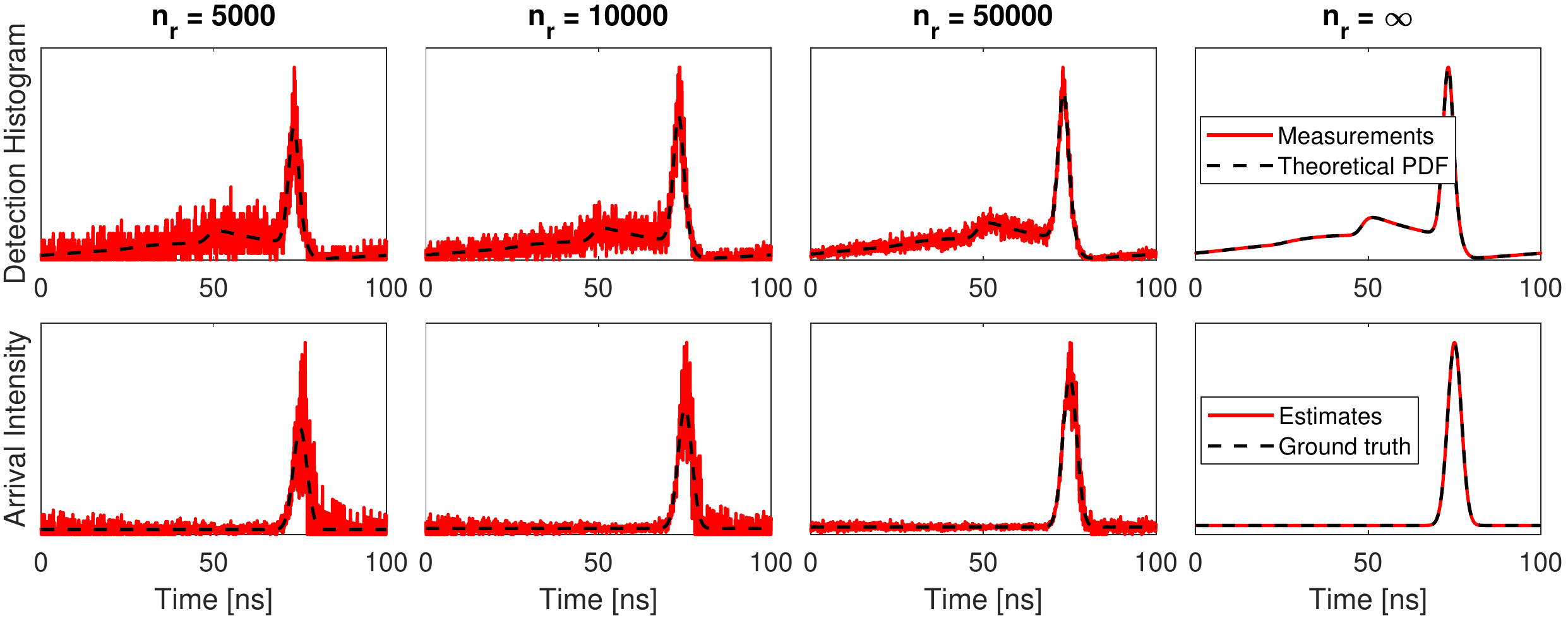}
\caption{Estimation of arrival intensity (bottom) from detection histogram (top) when $S=B=3.16$, $\sigma=2$ ns, $\tbin=50$ ps, $\tr=100$ ns, and $\td=75$ ns. From left to right: increased number of illuminations ($\nr$), where in the last column, theoretical detection histogram is used as measurement.}
\label{fig:hist_knownSB}
\end{figure*}

\subsection{Nonlinear Inverse Formulation and Algorithm}
Suppose that the time interval $[0,\tr)$ is partitioned into $\nb$ equally spaced time bins with bin size $\mathsf{t_{bin}}$, which is the case in TCSPC.\footnote{TCSPC systems digitize photon detection times using time-to-amplitude converters (TACs) or time-to-digital converters TDCs~\cite{Becker2005}. 
The number of bits allocated for an event record is usually fixed, so there is an inherent tradeoff between the acquisition resolution and repetition period. 
For instance, the HydraHarp 400 dedicates 15 bits per record (32\,768 bins) and has a base bin resolution of 1 ps, which can be multiplied by powers of 2~\cite{Wahl2008}.}
Define $\nd := \xd/\mathsf{t_{bin}}$. 
Let the normalized  histogram of detection times be denoted by $\hbf=(h_1,\ldots,h_{\nb})$, where $\sum_{i=1}^\nb h_i=1$. A discrete model for \eqref{eq:inverseProb_c} is then
\begin{equation*}
\hbf = -\diag{\gbf} \lambf + \Lambda^{-1}\lambf + \Lambda^{-1} (\gbf^T\lambf)\lambf + \Delta,
\end{equation*}
where $\lambf=(\lambda_1,\ldots,\lambda_{\nb})$ is a discretization of $\lambda(t)$;
$\Lambda$ is the total flux and is assumed to be known;
$g_i = \sum_{k= i-\nd}^{i-1}h_k$ for $i> \nd$ and $g_i=\sum_{k=i-\nd+\nb}^{\nb}h_k + \sum_{k=1}^{i-1}h_k$ for $i\leq \nd$, which follows from \eqref{eq:def_g};
$\diag{\gbf}$ is a diagonal matrix with $\gbf$ on its diagonal;
and $\Delta$ represents the error due to discretization and the difference between the finite-sample empirical distribution and the limiting distribution. 
For any fixed $\hbf$ (hence fixed $\gbf$), define an operator $\mathcal{T}(\,\cdot\,;\hbf):\mathbb{R}^\nb\to\mathbb{R}^\nb$ as
\begin{equation}
\lambf \mapsto \mathcal{T}(\lambf;\hbf):= -\diag{\gbf} \lambf + \Lambda^{-1}\lambf + \Lambda^{-1} (\gbf^T\lambf)\lambf.
\label{eq:def_T}
\end{equation} 
The inverse problem that we need to solve is then to estimate $\lambf$ from the nonlinear system $\hbf=\mathcal{T}(\lambf;\hbf) + \Delta$ given a measurement vector $\hbf$.
Define the optimization problem:
\begin{equation}
\min_{\lambf} \left\{F(\lambf):= D(\lambf) + \delta_{[0,M]^\nb}(\lambf)\right\},
\label{eq:optimization}
\end{equation}
where $D(\lambf):=\frac{1}{2}\|\hbf - \mathcal{T}(\lambf;\hbf)\|^2$ with $\|\cdot\|$ being the Euclidean norm and $\delta_{[0,M]^\nb}$ the indicator function of the bounded hypercube $[0,M]^\nb$ for some constant $M$. 
Note that while one may include stronger regularizers to reflect prior knowledge about $\lambf$, our goal here is to demonstrate that the proposed method can reconstruct the arrival intensity without any prior knowledge other than the intensity being non-negative; the method is thus % , thus is 
applicable to a broader class of applications where the arrival intensity is less predictable such as in NLOS imaging.
We use a monotone accelerated proximal gradient (APG) algorithm \cite{Li.Lin2015} to solve \eqref{eq:optimization}. 
Note that the proximal operator for $\delta_{[0,M]^\nb}$ is the orthogonal projector onto $[0,M]^\nb$, denoted by $\Pi_{[0,M]^\nb}(\cdot)$, and the gradient of $D(\lambf)$ is computed as follows:
\begin{align}
&\grad D(\lambf) = \mathbf{J}_{\mathcal{T}}^T\left(\mathcal{T}(\lambf;\hbf) - \hbf\right)\nonumber\\
& = \bigg(\frac{\gbf\lambf^T}{\Lambda} + \frac{1 + \gbf^T\lambf}{\Lambda}\mathbf{I} - \diag{\gbf} \bigg)\left(\mathcal{T}(\lambf;\hbf) - \hbf\right),
\label{eq:grad_nonlinear}
\end{align}
where $\mathbf{J}_\mathcal{T}$ is the Jacobian matrix of $\mathcal{T}$ and $\mathbf{I}$ is the identity matrix. 
We emphasize that $\hbf$, $\gbf$, and $\Lambda$ are fixed throughout the algorithm, thus they are treated as constant instead of functions of $\lambf$ when computing the gradient $\grad D(\lambf)$ in \eqref{eq:grad_nonlinear}.
The convergence of the monotone APG algorithm relies on an appropriate choice of the step size $\gamma$, which should satisfy $\gamma<1/L$, where $L$ is the Lipschitz constant of $\grad D(\cdot)$ \cite{Li.Lin2015}. The following proposition provides an upper-bound $L_u$ for $L$.

\begin{proposition}
\label{prop_converge}
The Lipschitz constant $L$ of the function $\grad D(\cdot)$ defined in \eqref{eq:grad_nonlinear} is upper-bounded by $L_u$ on $[0,M]^\nb$, where $L_u$ is defined as
\begin{equation*}
L_u:=2\Lambda^{-2}\nb M^2 + \left(2\Lambda^{-2} + 2 + 6\Lambda^{-1}\right)\sqrt{\nb}M + 4\Lambda^{-1} + 2.
\end{equation*}
\end{proposition}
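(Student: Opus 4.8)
The plan is to use that $D$ is a polynomial (of degree four) in $\lambf$, so $\grad D$ is smooth, and that the feasible set $[0,M]^{\nb}$ is convex; hence a bound on how much $\grad D$ can vary \emph{within} the box is a valid Lipschitz constant for $\grad D$ restricted to the box, which is exactly what the step-size requirement $\gamma<1/L$ of the monotone APG scheme~\cite{Li.Lin2015} needs. Concretely, for $\lambf,\lambf'\in[0,M]^{\nb}$ I would bound $\|\grad D(\lambf)-\grad D(\lambf')\|$ by a constant times $\|\lambf-\lambf'\|$ and read off $L_u$ as that constant.

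Writing $\grad D(\lambf)=\mathbf{A}(\lambf)\,\mathbf{e}(\lambf)$ with $\mathbf{A}(\lambf):=\mathbf{J}_{\mathcal{T}}^{T}=\Lambda^{-1}\gbf\lambf^{T}+\Lambda^{-1}(1+\gbf^{T}\lambf)\mathbf{I}-\diag{\gbf}$ and $\mathbf{e}(\lambf):=\mathcal{T}(\lambf;\hbf)-\hbf$ (see~\eqref{eq:grad_nonlinear} and~\eqref{eq:def_T}), both $\mathbf{A}$ and $\mathbf{e}$ are polynomial in $\lambf$ of degree at most two, with coefficients built from the fixed vectors $\hbf$ and $\gbf$. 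The telescoping identity
\[
\grad D(\lambf)-\grad D(\lambf')=\mathbf{A}(\lambf)\bigl(\mathbf{e}(\lambf)-\mathbf{e}(\lambf')\bigr)+\bigl(\mathbf{A}(\lambf)-\mathbf{A}(\lambf')\bigr)\mathbf{e}(\lambf')
\]
reduces the problem to four sub-bounds: (i) $\sup_{[0,M]^{\nb}}\|\mathbf{A}(\lambf)\|$; (ii) a Lipschitz constant $\ell_{\mathbf{e}}$ of $\mathbf{e}$; (iii) $\sup_{[0,M]^{\nb}}\|\mathbf{e}(\lambf)\|$; and (iv) a Lipschitz constant $\ell_{\mathbf{A}}$ of $\mathbf{A}$. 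Then $L_u=(\sup\|\mathbf{A}\|)\,\ell_{\mathbf{e}}+\ell_{\mathbf{A}}\,(\sup\|\mathbf{e}\|)$, and one rearranges the result into the stated closed form.

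Each of (i)--(iv) is just the triangle inequality together with elementary norm facts. Because $\hbf$ is a normalized histogram, $h_i\ge 0$ and $\sum_i h_i=1$, so $\|\hbf\|_2\le\|\hbf\|_1=1$. Each $g_i$ is a sum of $\nd$ of the nonnegative entries $h_k$, so $0\le g_i\le 1$; hence $\|\diag{\gbf}\|=\|\gbf\|_\infty\le 1$, while $\|\gbf\|_1=\nd\le\nb$ gives $\|\gbf\|_2\le\sqrt{\nb}$. On the box, $\|\lambf\|_\infty\le M$, $\|\lambf\|_2\le\sqrt{\nb}\,M$, $\|\lambf\|_1\le\nb M$, and the scalar $\gbf^{T}\lambf\ge 0$ is bounded via Cauchy--Schwarz (or Hölder) in terms of $\nb$ and $M$. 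Plugging these into, e.g., (ii) through
\[
\mathcal{T}(\lambf;\hbf)-\mathcal{T}(\lambf';\hbf)=\bigl(\Lambda^{-1}\mathbf{I}-\diag{\gbf}\bigr)(\lambf-\lambf')+\Lambda^{-1}\bigl[(\gbf^{T}\lambf)(\lambf-\lambf')+\bigl(\gbf^{T}(\lambf-\lambf')\bigr)\lambf'\bigr],
\]
and handling the rank-one term $\gbf\lambf^{T}$ and the scalar $1+\gbf^{T}\lambf$ in $\mathbf{A}(\lambf)-\mathbf{A}(\lambf')$ the same way, produces the four constants; multiplying and collecting powers of $\Lambda^{-1}$, $\sqrt{\nb}$, and $M$ gives $L_u$.

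I expect the main difficulty to be bookkeeping rather than any isolated hard estimate: since $\mathcal{T}$ contains the genuinely nonlinear term $\Lambda^{-1}(\gbf^{T}\lambf)\lambf$, $D$ is quartic, and both $\mathbf{A}$ and $\mathbf{e}$ (and their increments) carry several mixed pieces---rank-one outer products $\gbf\lambf^{T}$, the bilinear scalar $\gbf^{T}\lambf$, and products of two copies of $\lambf$---so a number of cross-terms have to be bounded and, for each one, the Hölder/Cauchy--Schwarz pairing must be chosen so that the final constant collapses to exactly the claimed combination of $\Lambda^{-2}\nb M^{2}$, $\sqrt{\nb}\,M$, and $O(1)$ contributions. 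A route some may find cleaner is to differentiate~\eqref{eq:grad_nonlinear} once more to obtain the Hessian $\grad^{2}D(\lambf)$ in closed form and bound $\sup_{[0,M]^{\nb}}\|\grad^{2}D(\lambf)\|$ directly; this yields the same $L_u$ at the price of a longer expression to bound term by term.
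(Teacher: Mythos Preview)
Your plan is essentially the paper's proof. The paper multiplies~\eqref{eq:grad_nonlinear} through by $\Lambda$, expands, and splits $\Lambda\|\nabla D(\ubf)-\nabla D(\vbf)\|$ by the triangle inequality into six terms $T_1,\dots,T_6$ (corresponding to the pieces $\gbf\lambf^T\mathcal{T}(\lambf)$, $\mathcal{T}(\lambf)$, $\gbf^T\lambf\,\mathcal{T}(\lambf)$, $\Lambda\,\diag{\gbf}\mathcal{T}(\lambf)$, $\gbf\lambf^T\hbf$, $\gbf^T\lambf\,\hbf$), then bounds each with Cauchy--Schwarz and $\|\ubf\|,\|\vbf\|\le\sqrt{\nb}M$. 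Your telescoping of $\mathbf{A}(\lambf)\mathbf{e}(\lambf)$ and the four sub-bounds (i)--(iv) generate exactly the same ingredients; the paper's $T_2$ is precisely your Lipschitz bound on $\mathcal{T}$, its $\|\mathcal{T}(\ubf)\|$ estimate is your $\sup\|\mathbf{e}\|$ piece, and $T_1,T_3,T_5,T_6$ come from the rank-one and scalar parts of $\mathbf{A}(\lambf)-\mathbf{A}(\lambf')$.

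The one place your sketch departs from the paper is the control of $\|\gbf\|_2$. The paper uses $\max_i g_i\le\|\gbf\|_2\le\|\gbf\|_1\le\|\hbf\|_1=1$ (invoking $\td\le\tr$), and this bound $\|\gbf\|\le 1$ is exactly what makes the rank-one term $\|\gbf\ubf^T\mathcal{T}(\ubf)-\gbf\vbf^T\mathcal{T}(\vbf)\|$ contribute $\Lambda^{-1}\nb M^2+(\Lambda^{-1}+2)\sqrt{\nb}M$ rather than something larger. With only your $\|\gbf\|_2\le\sqrt{\nb}$, those terms pick up an extra factor of $\sqrt{\nb}$ and the final constant will not collapse to the stated $L_u$ (the leading order would be $\nb^{3/2}M^2$, not $\nb M^2$). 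So the architecture is right and matches the paper, but to reproduce the specific $L_u$ you must use $\|\gbf\|\le 1$ as the paper does, not $\sqrt{\nb}$.
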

\begin{proof}
See Appendix \ref{app_proof_converge}.
\end{proof}

Setting the step size $\gamma=1/L_u$, starting with some initialization $\lambf^0=\lambf^1=\zbf^1\in [0,M]^\nb$ and $q_0=0,q_1=1$, for $k\geq 1$, the monotone APG algorithm for solving \eqref{eq:optimization} proceeds as follows:
\begin{equation}
\begin{split}
\ybf^{k}&=\!\lambf^k \!+\! \frac{q_{k-1}}{q_k}\big(\zbf^k - \lambf^k\big)\!+\!\frac{q_{k-1}-1}{q_{k}}\big(\lambf^{k}-\lambf^{k-1}\big),\\
\zbf^{k+1} &= \Pi_{[0,M]^\nb}\big(\ybf^k - \gamma \grad D(\ybf^k)\big),\\
\xbf^{k+1} &= \Pi_{[0,M]^\nb}\big(\lambf^k - \gamma \grad D(\lambf^k)\big),\\
q_{k+1} &=\frac{\sqrt{4q_k^2+1}+1}{2},\\
\lambf^{k+1}&=\begin{cases}
\zbf^{k+1}, &\text{if } F(\zbf^{k+1})\leq F(\xbf^{k+1}); \\
\xbf^{k+1}, &\text{otherwise}.
\end{cases}
\end{split}
\label{eq:monoAPG}
\end{equation}

Since \eqref{eq:optimization} is a nonconvex optimization problem, a good initialization is important to avoid converging to local minima that are not global minima.
We now introduce an initialization scheme.
Let $C_{\lambda}$ be a scalar that depends on $\lambda$ through $C_{\lambda}=\int_{0}^{\tr} \lambda(x)g(x)\,\d x$. Then \eqref{eq:inverseProb_c} can be written as $f_{\XD}(x)=-\lambda(x)g(x)+\Frac{(1+C_{\lambda})\lambda(x)}{\Lambda}$, which implies
\begin{equation}
\lambda(x)=\frac{f_{\XD}(x)}{\Frac{(1+C_{\lambda})}{\Lambda}-g(x)}.
\label{eq:init_lam}
\end{equation}
Plugging \eqref{eq:init_lam} back into the definition of $C_{\lambda}$, we obtain a fixed point equation for $C_{\lambda}$:
\begin{equation}
C_{\lambda} = \int_{0}^{\tr} \frac{f_{\XD}(x)g(x )}{\Frac{(1+C_{\lambda})}{\Lambda} - g(x)}\, \d x.
\label{eq:init_C}
\end{equation}
Notice that by \eqref{eq:init_lam} the feasible set for $C_{\lambda}$ is $\mathcal{C}=\{C\in\mathbb{R}:\Frac{(1+C)}{\Lambda}-g(x)>0,\forall x\in[0,\tr)\}$. It follows that the RHS of \eqref{eq:init_C} is positive on $\mathcal{C}$ and monotone decreasing to zero as $C_{\lambda}$ goes to infinity. Since the LHS of \eqref{eq:init_C} is linearly increasing, a graph will easily show that \eqref{eq:init_C} has a unique fixed point on $\mathcal{C}$. Therefore, if $f_{\XD}$ is known perfectly, then one can solve \eqref{eq:init_C} for $C_{\lambda}$ and plug $C_{\lambda}$ into \eqref{eq:init_lam} to have a perfect reconstruction of $\lambda$. 
However, in practice, we only have a histogram formed by a limited number of measured detection times. Nevertheless, it is plausible to estimate $C_{\lambda}$ as the fixed point of
\begin{equation}
\widehat{C}_{\lambda} = \sum_{i=1}^{\nb} \frac{h_i\, g_i}{\Frac{(1+\widehat{C}_{\lambda})}{\Lambda} - g_i}.
\label{eq:init_C1}
\end{equation}
We can then use $\lambf^0$ with the $i^{th}$ entry being defined as
\begin{equation}
\lambda^0_i = \frac{h_i}{\Frac{(1+\widehat{C}_{\lambda})}{\Lambda} - g_i}
\label{eq:init_lam1}
\end{equation}
as the initialization of the nonconvex optimization problem. While we do not have a theoretical guarantee for convergence to the global minimum, in our simulation, both $\hbf$ plus random perturbation and the more principled initialization \eqref{eq:init_lam1} lead to good estimates. 
Because solving \eqref{eq:init_C1} is easy and the initial estimate \eqref{eq:init_lam1} is usually close to the solution, using the principled initialization allows the algorithm to converge faster.

Fig.~\ref{fig:hist_knownSB} presents simulated detection histograms and the corresponding arrival intensity estimates using~\eqref{eq:monoAPG}, where $S=B=3.16$ and $\sigma=2$ ns.
We notice that as $\nr$ increases, the detection histogram approaches $f_{\XD}$.
Our estimated arrival intensity likewise approaches the true arrival intensity as $\nr$ increases.
It is interesting to note that while the error in the detection histogram resembles Poisson noise in that the variance increases as the mean increases, the error in the arrival intensity estimate is signal-dependent in a different way.
We observe that the error variance is roughly proportional to the pointwise ratio of $f_{\XD}$ and $\lambda$.
Although we have no theoretical results supporting this hypothesis, the observation suggests that the portions of the arrival intensity easiest to reconstruct are those least attenuated by the dead time effects, and vice versa.

\section{Application to Ranging}
\label{sec:ranging}

We now explore how the theory and algorithm developed in Sections~\ref{sec:formulation} and~\ref{sec:algo} can be used for depth estimation. 
In Section~\ref{subsec:ranging_true}, we assume that the acquisition parameters $S,B,\Lambda$ are known from accurate calibration and compare different methods using the true parameter values. In Section~\ref{subsec:ranging_est}, we provide estimators for $B$ and $S$, and together with the estimator for $\Lambda$ introduced in Section~\ref{subsec:Lam_est}, we test our proposed methods using the estimated parameters.

\subsection{Ranging with True Acquisition Parameters}
\label{subsec:ranging_true}

The ML depth estimator for the Poisson arrival process passes the set of arrivals
through a log-matched filter that is matched to the arrival intensity $\lambda(t)$, where the log-matched filter is defined as
$v(t) := \log(\lambda(t))  = \log\left(f_{\XA}(t)\right) + \log(\Lambda)$ \cite[(33)]{BarDavid1969}.
Given a set of low-flux detection times $\{t_i\}_{i=1}^n$, where photon loss due to dead time is negligible,  the log-matched filter for estimating the depth $z$ is defined as
\begin{equation*}
\begin{split}
\widehat{z} \left(\{t_i\}_{i=1}^n;f_{\XA}\right)
& := \argmax_z \sum_{i=1}^n  \int_{0}^{\tr} \delta(t-t_i)v\left(t+ 2z/c\right) \d t\\
& = \argmax_z \sum_{i=1}^n \log\left(f_{\XA}\left(t_i +2z/c\right)\right).
\end{split}
\end{equation*}

For practical implementation, $\{t_i\}_{i=1}^n$ may be quantized into $\nb$ equally spaced time bins over $[0,\tr)$ with bin centers $\{b_k\}_{k=1}^{\nb}$. A histogram $\mathbf{h}=(h_1,\ldots,h_{\nb})$ for the low-flux detection times can then be obtained from the quantized data $\{\bar{t}_k\}_{k=1}^{n}$. Moreover, instead of estimating the depth $z$, we can estimate the time delay $\tau:=2z/c$, since the mapping from $z$ to $\tau$ is one-to-one.  The estimator for $\tau$ is then
\begin{equation*}
\widehat{\tau} \left(\mathbf{h};f_{\XA}\right) :=  \argmax_{\tau \in \Gamma} \left\{\sum_{k=1}^{\nb} h_k \log\left(f_{\XA}\left(b_k +\tau\right)\right)\right\} + b_{\nb/2},
\end{equation*}
where $\Gamma:=\{-b_{\nb/2},\ldots,-b_1,0,b_1,\ldots,b_{\nb/2}\}$ is a set of on-grid relative time delays, and $f_{\XA}$ is the arrival PDF assuming $b_{\nb/2}$ is the true delay.

In Section~\ref{sec:formulation}, we have derived the limit of the empirical distribution of high-flux detection times $f_{\XD}$. 
Hence, we can similarly define a log-matched filter matched to $f_{\XD}$ and define an estimator as
\begin{equation*}
\widehat{\tau} \left(\mathbf{h};f_{\XD}\right):=  \argmax_{\tau\in\Gamma} \left\{\sum_{k=1}^{\nb} h_k \log\left(f_{\XD}\left(b_k +\tau\right)\right)\right\}+b_{\nb/2},
\end{equation*}
which is preferable if $\hbf$ is obtained via high-flux acquisition with non-negligible photon loss due to dead time.
Note however that $\widehat{\tau} \left(\mathbf{h};f_{\XD}\right)$ is not the ML estimator with dead time effects (even without quantization error), because in this case the joint PDF does not factorize as product of the marginals. While one can obtain the exact joint PDF from the transition PDF \eqref{eq:cond_pdf} and the marginal PDF $f_{\XD}$, the true ML estimator is inconvenient to implement. Therefore, $\widehat{\tau} \left(\mathbf{h};f_{\XD}\right)$ is used in our simulations.

In Section~\ref{sec:algo}, we have derived an algorithm for estimating the arrival time distribution from the detection time distribution. Hence, given a detection histogram, our algorithm can compute an estimate for the arrival histogram $\widehat{\mathbf{h}}^\A$, and then $\tau(\widehat{\hbf}^\A;f_{\XA})$ can be used for depth estimation.

Based on the above discussion, letting $\hbf^{\LF}$ and $\hbf^{\HF}$ denote the detection time histograms obtained via low-flux and high-flux acquisitions, respectively, we compare six depth estimation methods applicable to asynchronous TCSPC systems. 
The methods are as follows:
\begin{enumerate}
\item \textbf{LF}: The low-flux approach first attenuates the incident flux (in practice by applying a neutral density filter) to limit the total flux arriving at the detector to 5\% so that dead time effects can be ignored.
Since the low-flux detection histogram $\hbf^\LF$ can then be considered to be the same as the arrival histogram, it then uses $\widehat{\tau}(\hbf^\LF;f_{\XA})$ as the estimator. 
\item \textbf{HF}: The high-flux method na\"ively assumes that dead time has no effect on the acquisition and uses the estimator $\widehat{\tau}(\hbf^\HF;f_{\XA})$, even when $\hbf^\HF$ is not a good approximation to the arrival histogram. 
\item \textbf{SC}: Shift correction assumes that the dead time only adds a bias to the estimate and that the bias can be computed and subsequently subtracted away.
In practice, this is equivalent to the optical calibration procedure in~\cite{He2013}; for our simulations, we compute the shift in the mode of $f_{\XD}$ compared to that of $f_{\XA}$ and subtract the shift correction from the HF estimate. 
\item \textbf{Isbaner}: This method, based on the work of Isbaner et al., first estimates the arrival histogram $\widehat{\hbf}^\A$ from $\hbf^\HF$ using the algorithm in~\cite{Isbaner.etal2016}, which has publicly available code,\footnote{http://projects.gwdg.de/projects/deadtimecorrectiontcspc} and then applies the estimator $\widehat{\tau}(\widehat{\hbf}^\A;f_{\XA})$. 
While \cite{Isbaner.etal2016} can estimate $\Lambda$ from data, we provide the algorithm with the true $\Lambda$ for fair comparison.
\item \textbf{Proposed method 1 -- MCPDF}: Our first method computes the Markov chain-based PDF $f_{\XD}$ to directly apply $\widehat{\tau}(\hbf^\HF;f_{\XD})$.
\item \textbf{Proposed method 2 -- MCHC}: Our second method is similar to that of Isbaner et al., except it first estimates the arrival histogram $\widehat{\hbf}^\A$ from $\hbf^\HF$ using the Markov chain-based histogram correction algorithm introduced in Section~\ref{sec:algo} and then uses $\widehat{\tau}(\widehat{\hbf}^\A;f_{\XA})$ as the estimator.
\end{enumerate}

We perform Monte Carlo simulations with $\tr = 100$ ns, $\td = 75$ ns, $\sigma = 0.2$ ns, and bin duration 
$\tbin = 5$ ps, which are reasonable experimental parameters for some laboratory settings. 
For each combination of $S$ and $B$, we generate 600 realizations of the arrival process with $\nr = 10^4$ illuminations.
Starting with the first arrival, the high-flux detection sequence is generated by removing subsequent arrivals if they occur within $\td$ of the previous detection.
Generation of the corresponding low-flux detection sequence proceeds in the same manner, but the arrival process is first attenuated via Bernoulli thinning, so photons arrive in only $5\%$ of illumination periods on average.
For each method, the log-matched filtering is performed via circular cross-correlation (circular convolution of the histogram with the time-reversed PDF).
This is due to the asynchronous dead time preserving the shift invariance of the arrival process.

\begin{figure*}[t]
    \centering
    \begin{subfigure}{0.01\linewidth}
        \begin{sideways}
            	\text{$S= 0.1$}
        \end{sideways}
    \end{subfigure}   
    \hfill
    \begin{subfigure}{0.32\linewidth}
        \setlength{\abovecaptionskip}{2pt}
        \centering
        \caption*{$B=0.1$}
        \includegraphics[width=\linewidth]{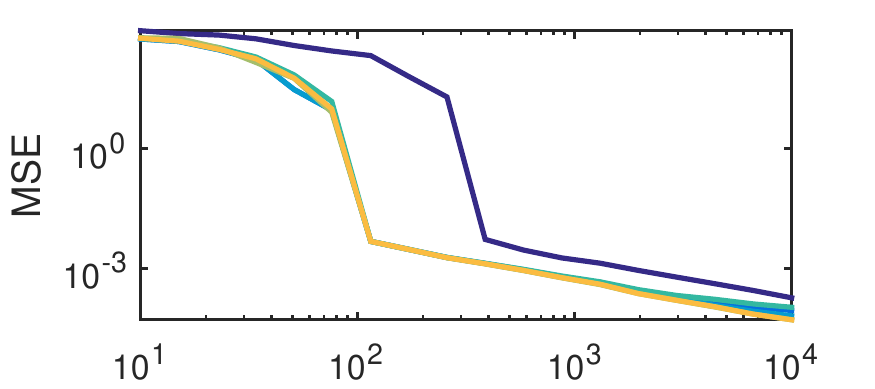}
    \end{subfigure}
    % \hfill
    \begin{subfigure}{0.32\linewidth}
        \setlength{\abovecaptionskip}{2pt}
        \centering
        \caption*{$B=0.562$}
        \includegraphics[width=\linewidth]{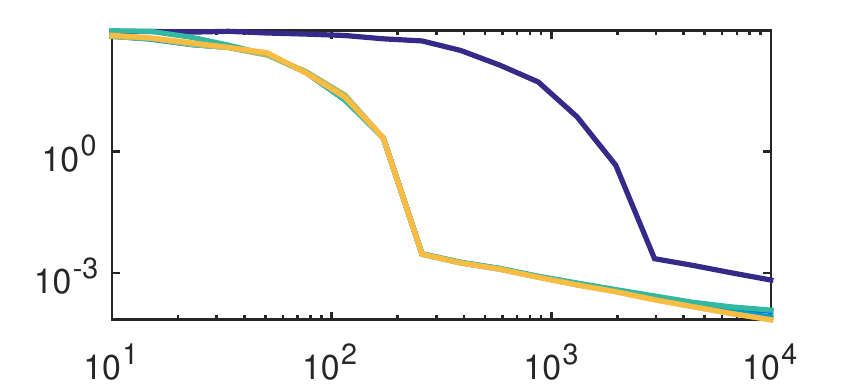}
    \end{subfigure}
    % \hfill    
    \begin{subfigure}{0.32\linewidth}
        \setlength{\abovecaptionskip}{2pt}
        \centering
        \caption*{$B=3.16$}
        \includegraphics[width=\linewidth]{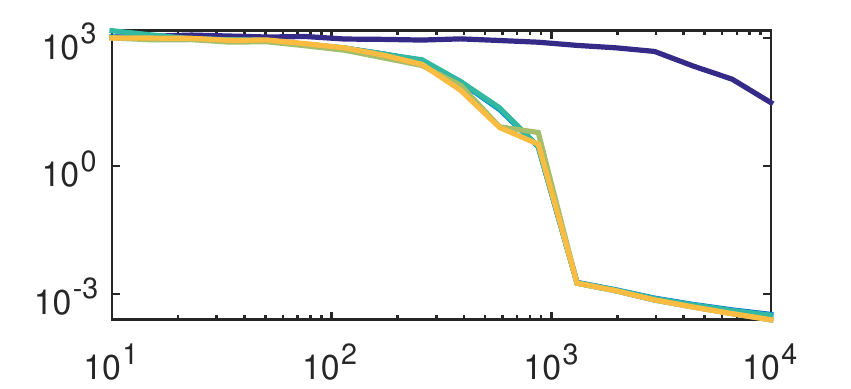}
    \end{subfigure}
    \\
    \begin{subfigure}{0.01\linewidth}
        \begin{sideways}
            	\text{$S= 0.562$}
        \end{sideways}
    \end{subfigure}  
    \hfill
    \begin{subfigure}{0.32\linewidth}
        \centering
        \includegraphics[width=\linewidth]{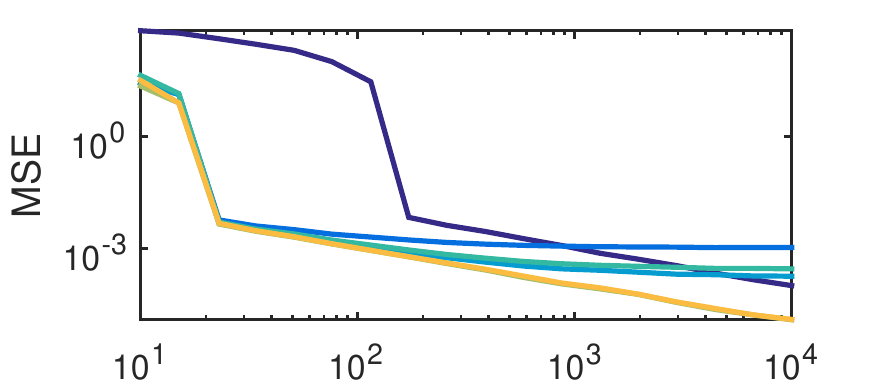}
    \end{subfigure}
    % \hfill
        \begin{subfigure}{0.32\linewidth}
        \centering
        \includegraphics[width=\linewidth]{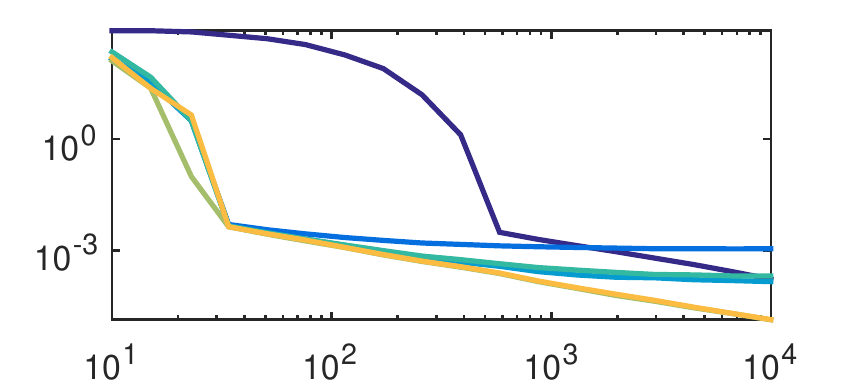}
    \end{subfigure}
    % \hfill    
    \begin{subfigure}{0.32\linewidth}
        \centering
        \includegraphics[width=\linewidth]{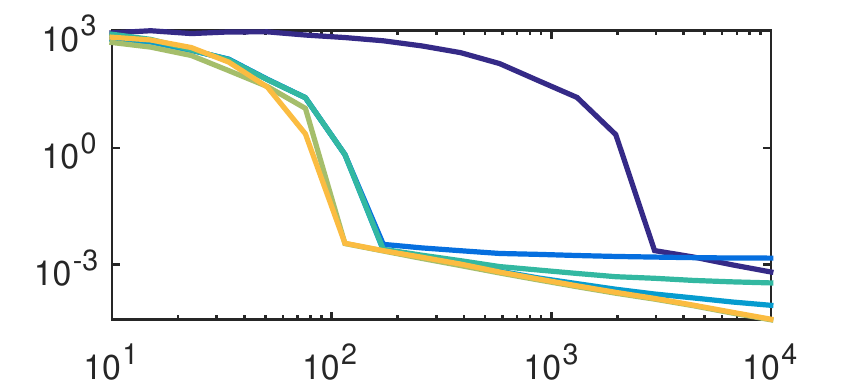}
    \end{subfigure}
    \\
    \begin{subfigure}{0.01\linewidth}
        \begin{sideways}
            	\text{$S= 3.16$}
        \end{sideways}
    \end{subfigure}   
    \hfill
    \begin{subfigure}{0.32\linewidth}
        \centering
        \includegraphics[width=\linewidth]{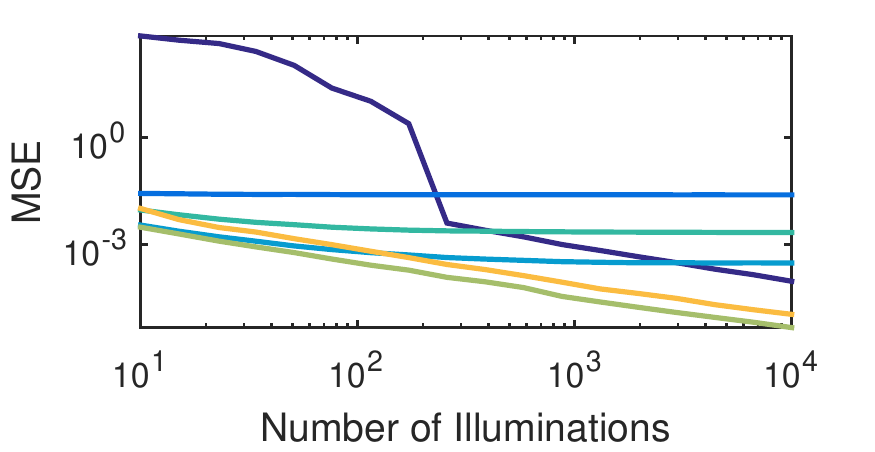}
    \end{subfigure}
    % \hfill
        \begin{subfigure}{0.32\linewidth}
        \centering
        \includegraphics[width=\linewidth]{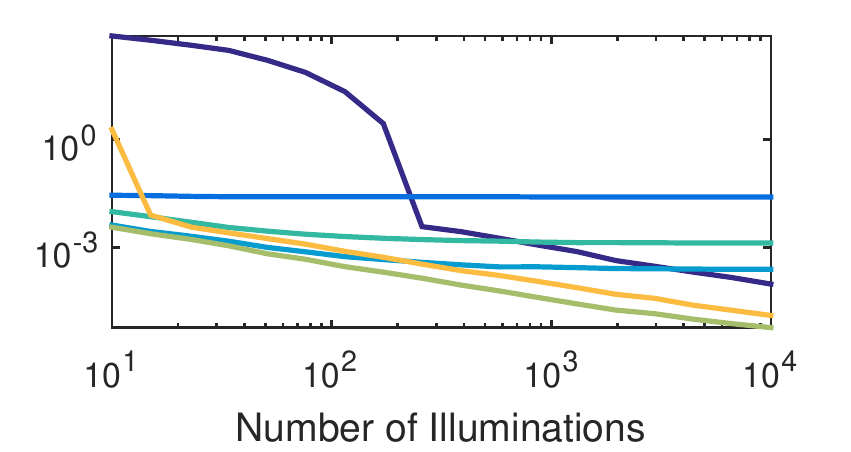}
    \end{subfigure}
    % \hfill    
    \begin{subfigure}{0.32\linewidth}
        \centering
        \includegraphics[width=\linewidth]{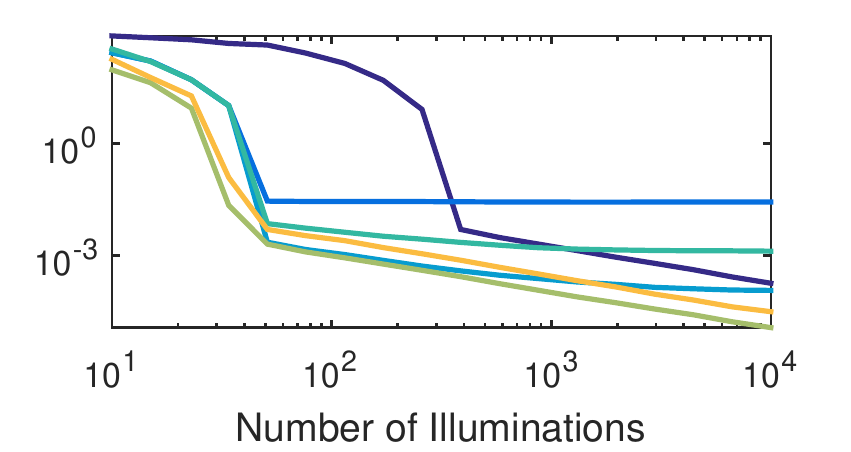}
    \end{subfigure}\\
    \begin{subfigure}{\linewidth}
        \centering
        \includegraphics[width=0.5\linewidth]{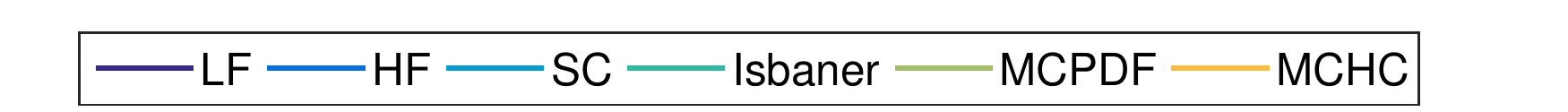}
    \end{subfigure}
    \caption{Plots of the MSE for ranging as a function of $\nr$ for $\tr = 100$ ns, $\td = 75$ ns, $\sigma = 0.2$ ns, $\tbin = 5$ ps and various $S$ and $B$ values. Our proposed methods (MCPDF and MCHC) take advantage of the increased detection rate to perform more accurate ranging than with the low-flux acquisition for all values of $S$, $B$, and $\nr$.}
    \label{fig:MSE_v_numArr}
\end{figure*}

\begin{figure*}[t]
    \centering
    \begin{subfigure}{0.01\linewidth}
        \begin{sideways}
            	\text{$S= 0.1$}
        \end{sideways}
    \end{subfigure}    
    \hfill
    \begin{subfigure}{0.32\linewidth}
        \setlength{\abovecaptionskip}{2pt}
        \centering
        \caption*{$B=0.1$}
        \includegraphics[width=\linewidth]{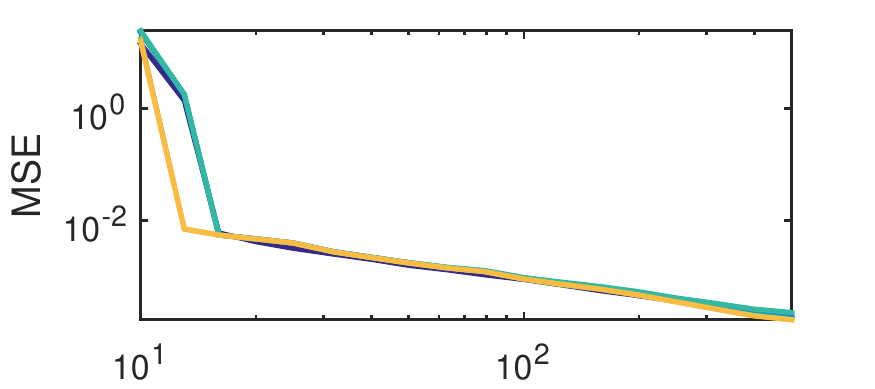}
    \end{subfigure}
    % \hfill
    \begin{subfigure}{0.32\linewidth}
        \setlength{\abovecaptionskip}{2pt}
        \centering
        \caption*{$B=0.562$}
        \includegraphics[width=\linewidth]{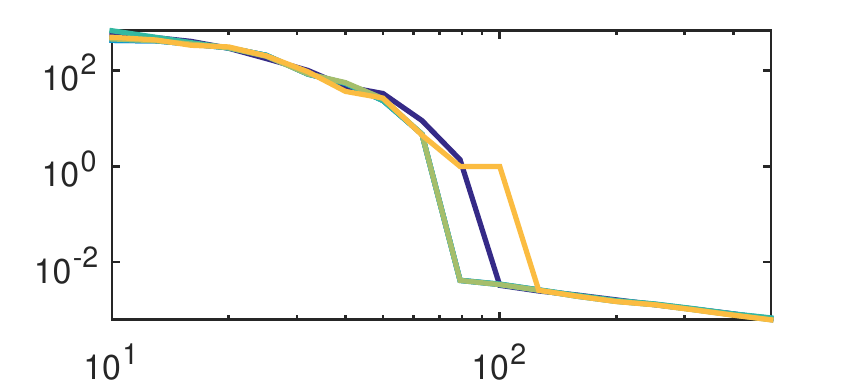}
    \end{subfigure}
    % \hfill    
    \begin{subfigure}{0.32\linewidth}
        \setlength{\abovecaptionskip}{2pt}
        \centering
        \caption*{$B=3.16$}
        \includegraphics[width=\linewidth]{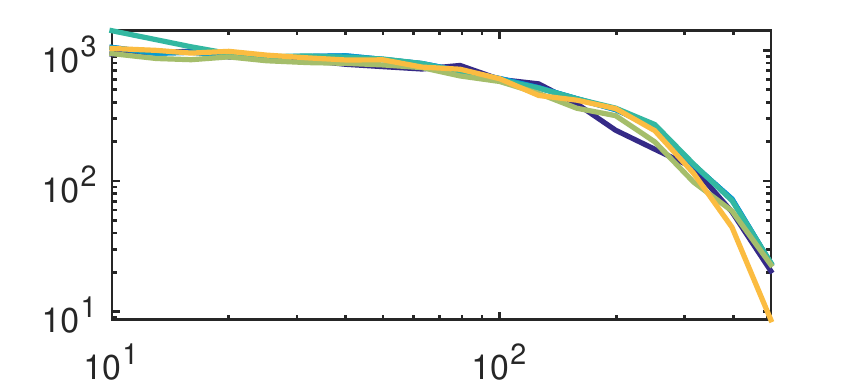}
    \end{subfigure}
    \\
    \begin{subfigure}{0.01\linewidth}
        \begin{sideways}
            	\text{$S= 0.562$}
        \end{sideways}
    \end{subfigure}      
    \hfill
    \begin{subfigure}{0.32\linewidth}
        \centering
        \includegraphics[width=\linewidth]{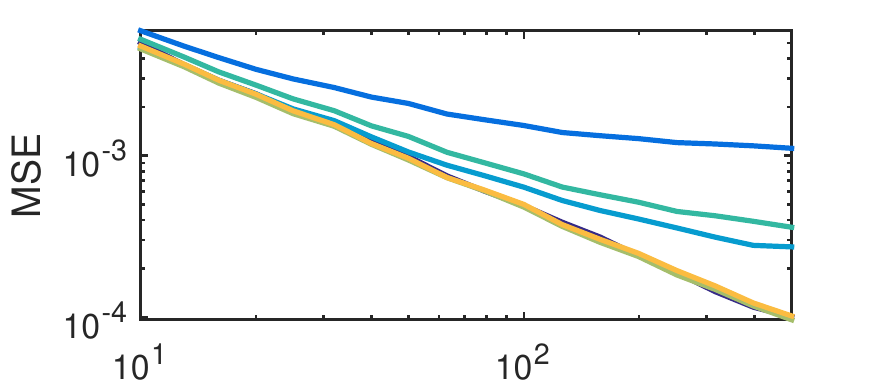}
    \end{subfigure}
    % \hfill
        \begin{subfigure}{0.32\linewidth}
        \centering
        \includegraphics[width=\linewidth]{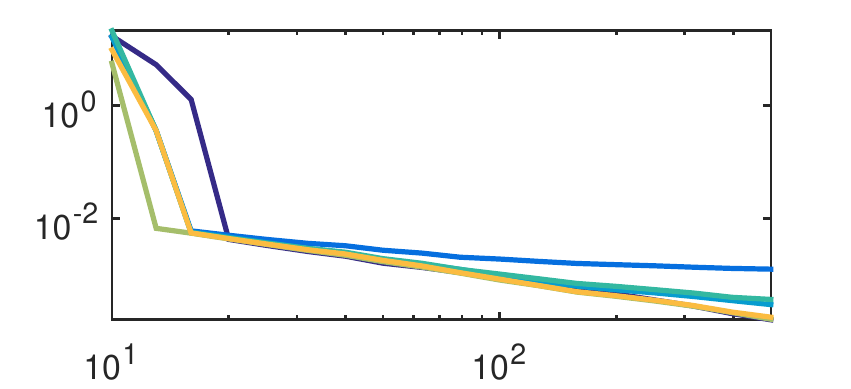}
    \end{subfigure}
    % \hfill    
    \begin{subfigure}{0.32\linewidth}
        \centering
        \includegraphics[width=\linewidth]{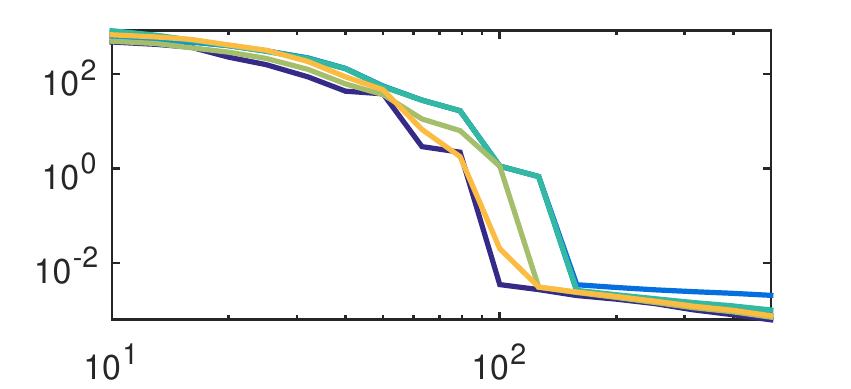}
    \end{subfigure}
    \\
    \begin{subfigure}{0.01\linewidth}
        \begin{sideways}
            	\text{$S= 3.16$}
        \end{sideways}
    \end{subfigure}      
    \hfill
    \begin{subfigure}{0.32\linewidth}
        \centering
        \includegraphics[width=\linewidth]{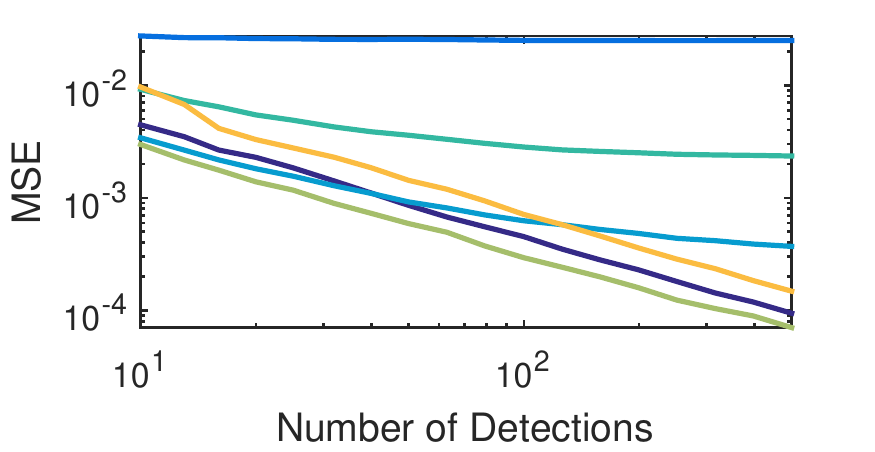}
    \end{subfigure}
    % \hfill
        \begin{subfigure}{0.32\linewidth}
        \centering
        \includegraphics[width=\linewidth]{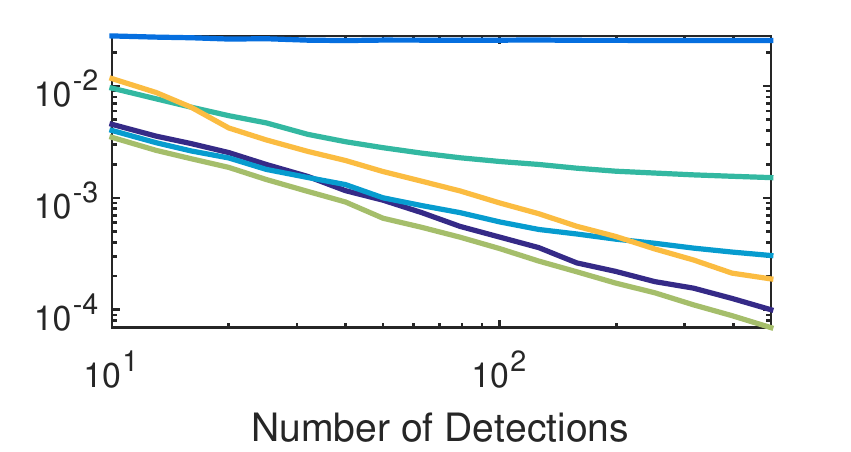}
    \end{subfigure}
    % \hfill    
    \begin{subfigure}{0.32\linewidth}
        \centering
        \includegraphics[width=\linewidth]{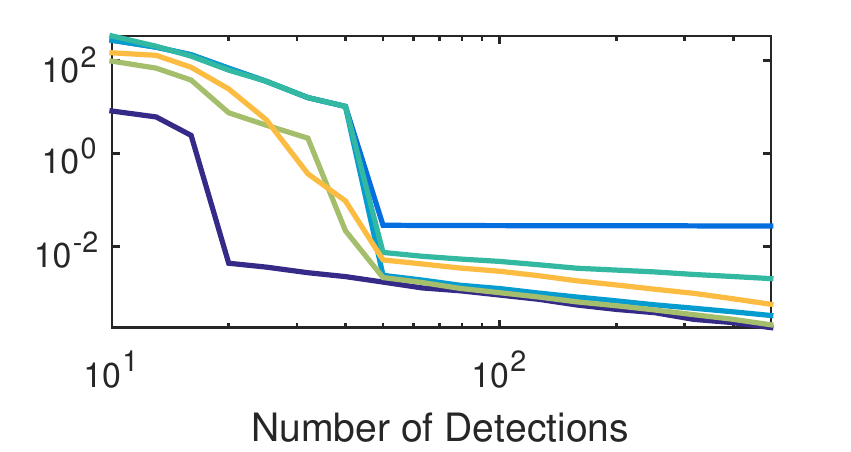}
    \end{subfigure}
    \\
    \begin{subfigure}{\linewidth}
        \centering
        \includegraphics[width=0.5\linewidth]{deadTimeFigs/legend}
    \end{subfigure}
    \caption{Plots of the MSE for ranging as a function of the number of detections for $\tr = 100$ ns, $\td = 75$ ns, $\sigma = 0.2$ ns, $\tbin = 5$ ps and various $S$ and $B$ values. For high $S$ and $B$ not too large, the presence of dead time actually improves ranging accuracy with our MCPDF method relative to the low flux measurements due to the narrowing of the signal pulse.}
    \label{fig:MSE_v_numDet}
\end{figure*}

Fig.~\ref{fig:MSE_v_numArr} compares the MSE for time delay estimation achieved by the six compared methods as a function of the number of illuminations.
We observe that MCPDF usually achieves the lowest MSE, since it directly performs parameter estimation with the updated detection model. 
Equivalently, MCPDF needs the fewest illuminations to achieve a given MSE, hence enabling the fastest acquisition.
The MSE of MCHC is comparable to that of MCPDF, limited only in that it must first invert the histogram before estimating the depth.
Compared to the LF approach, both MCPDF and MCHC require fewer illuminations to achieve the same MSE, and that time efficiency increases as $S$ and $B$ increase and dead time has a more significant impact.
Regarding the other approaches, HF is more effective only for low numbers of illuminations but the estimate quickly becomes biased and is therefore not suited to precise depth measurement.
Correcting for this bias with SC is quite effective for extending to somewhat higher $\nr$, although eventually, more accurate modeling is necessary for more precise estimates.
The state-of-the-art method for dealing with asynchronous dead-time models by Isbaner et al.~\cite{Isbaner.etal2016}  achieves low MSE when the total flux is low or moderate, while the accuracy degrades in high-flux scenarios. The performance degradation is due to their approximation of the detection time distribution being less accurate in high-flux settings (S.~Isbaner, personal communication, May 14, 2018).

In addition to enabling faster acquisition, we explored whether dead time could lead to more accurate ranging for an equal number of detected photons. 
The Fisher information analysis in Section~\ref{subsec:fisher} has provided a theoretical prediction that for sufficiently high SBR, estimating depth from the dead time-distorted detection time distribution can yield lower MSE than that from the arrival time distribution. 
Although the estimators in our Monte Carlo simulation are not guaranteed to achieve the Cram{\'e}r-Rao lower bound (i.e., the reciprocal of Fisher information), we would like to see whether the reduction of ranging error due to dead time 
also exists with simple and commonly used estimators. 
Fig.~\ref{fig:MSE_v_numDet} compares the MSE for time delay estimation by the six methods as a function of the number of detections.
We notice that for the high SBR cases where $S=3.16, B=0.1$ and $S=3.16, B=0.562$, MCPDF outperforms LF, which provides numerical evidence that dead time can be beneficial when properly modeled.

\subsection{Ranging with Estimated Acquisition Parameters}
\label{subsec:ranging_est}
The results in Section~\ref{subsec:ranging_true} use methods that compute $f_{\XD}$, $f_{\XA}$, and $\widehat{\hbf}^\A$ assuming the true values of $B$, $S$, and $\Lambda$ are known.
However, in most practical scenarios, this information will not be available \emph{a priori}.
We describe here some strategies that can be used for determining those parameters.

\subsubsection{Maximum Likelihood Estimator for $B$}
We begin by assuming that background calibration measurements can occasionally be made within the ranging process, for which photons are detected while the laser is turned off. 
For sequences of ranging measurements such as in 3D imaging, such background-only acquisitions could be made for each laser position or for sets of laser points (e.g., once per row or once per image).
Since the background process is homogeneous with $\lambda(t) = \lambdab$, we can rewrite~\eqref{eq:con_density_T} as
\begin{equation*}
f_{T_{i+1}|T_i}(t|t_i)=\lambdab\exp\big(-\lambdab\big(t-(t_i+\td)\big)\big)\mathbb{I}\{t>t_i+\td\}.   
\end{equation*}
Then the conditional distribution of $T_i$'s given $T_1=t_1$ is
\begin{equation*}
\begin{split}
    &f_{T_2,\ldots,T_n|T_1}(t_2,\ldots,t_n|t_1)=\prod_{i=1}^{n-1}f_{T_{i+1}|T_i}(t_{i+1}|t_i)\\
    &=\lam_b^{n-1}\exp(-\lambdab (t_n-t_1) + (n-1)\lambdab \td).
\end{split}
\end{equation*}
Given a set of absolute detection times $\{t_i\}_{i=1}^n$, the (conditional) log-likelihood function $\ln\left( f_{T_2,\ldots,T_n|T_1}\right)$ is
\begin{equation*}
\mathcal{L}(\{t_{i}\}_{i=1}^n;\lambdab)=(n-1)\ln(\lambdab) - \lambdab (t_n-t_1) + (n-1)\lambdab\td.
\end{equation*}
Setting the derivative of $\mathcal{L}(\{t_{i}\}_{i=1}^n;\lambdab)$ with respect to $\lambdab$ to zero, we obtain the (conditional) ML estimator for $\lambdab$ as 
\begin{equation*}
 \lambdahatb^{\text{ML}} = \frac{n-1}{(t_n-t_1) - (n-1)\td}.
\end{equation*}
It follows that $\Bml = \lambdahatb^{\mathrm{ML}}\tr$.
\begin{figure}[t]
    \centering
    \begin{subfigure}{0.49\linewidth}
        \centering
        \includegraphics[trim={0 0 4mm 2mm},clip,width=\linewidth]{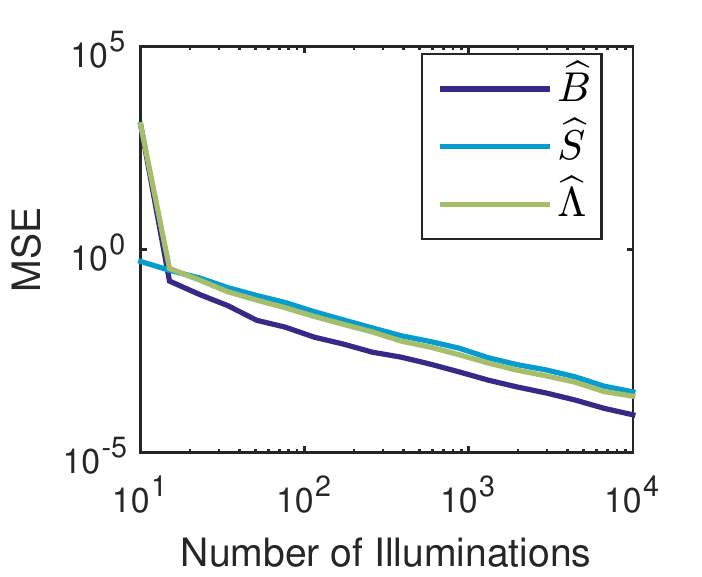}
        \caption{}
        \label{fig:BSL_est}
    \end{subfigure} 
    \hfill
    \begin{subfigure}{0.49\linewidth}
        \centering
        \includegraphics[trim={0 0 4mm 2mm},clip,width=\linewidth]{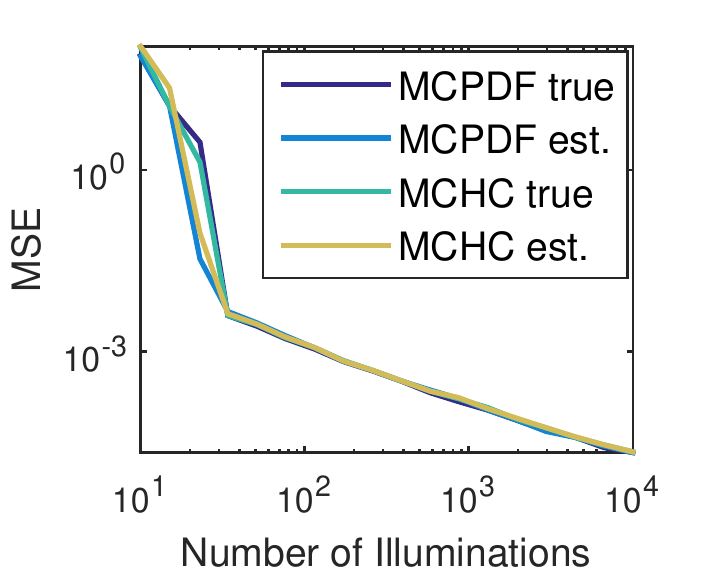}
        \caption{}
        \label{fig:Z_est}
    \end{subfigure}
    \caption{Plots of the parameter estimates as a function of $\nr$ for $S=B=0.562$, $\tr = 100$ ns, $\td = 75$ ns, $\sigma = 0.2$ ns, and $\tbin = 10$ ps.
    In (a), the estimates $\Bhat$, $\Lhat$, and $\Shat$ improve as $\nr$ increases.
    The ranging results in (b) using estimated parameters show no degradation in performance compared to the methods with parameters known \emph{a priori}.}
    \label{fig:est_parameters}
\end{figure}

\subsubsection{Estimating $S$}
From $\Lml$ and $\Bml$, we could also compute $\Shat = \max(\Lml-\Bml,0)$ to ensure non-negativity.
However, setting $\Shat=0$ whenever $\Bml>\Lml$ is not informative for depth estimation with the log-matched filter, since $f_{\XD}$ and $f_{\XA}$ would be uniform PDFs.
Instead, one can assume that there is always at least some small amount of signal and background in the ranging process, so we choose to set minimum values of $\Smin = \Bmin = 0.01$.
Then $\Bhat=\max(\Bml,\Bmin)$, $\Lhat=\max(\Lml,\Bhat+\Smin)$, and $\Shat=\Lhat-\Bhat$.

Fig.~\ref{fig:est_parameters} shows one example of estimates using this strategy for 500 Monte Carlo trials with $S=B=0.562$, $\tr = 100$ ns, $\td = 75$ ns, $\sigma = 0.2$ ns, and $\tbin = 10$ ps.
In Fig.~\ref{fig:BSL_est}, the $\Bhat$, $\Lhat$, and $\Shat$ estimates consistently improve as $\nr$ increases beyond a very small number of detections.
The resulting depth estimates shown in Fig.~\ref{fig:Z_est} are virtually indistinguishable from the methods using the known parameter values.
It is worth noting that, while $\Bml$ applies for any value of $B$, $\Lml$ becomes less reliable for large $\Lambda$ since $P(R=0)$ in~\eqref{eq:K_pdf} approaches unity.
If the number of illuminations was not fixed in advance, one could pursue an adaptive acquisition strategy as in~\cite{Medin2018}.
Alternatively, for 3D imaging, one could take advantage of spatial correlations to estimate $\Lambda$, for example, to borrow measurements from neighboring pixels.

\section{Conclusion}

This work studied dead time compensation for a modern, asynchronous, nonparalyzable detector. 
By using a Markov chain model for detection times, we obtained the limit of the empirical distribution of detection times as the stationary distribution of the Markov chain. 
We found that the Fisher information per detection can be higher for this limiting distribution than for the arrival distribution, which suggests that the distortion due to dead time 
can be beneficial for depth estimation if used properly. 
Indeed, simulation results showed that our first proposed method MCPDF, which is a log-matched filter matched to the limiting distribution, achieved lower error than the low-flux method for a fixed number of detections when the SBR is sufficiently high. 
By exploiting the stationary condition for the Markov chain, we derived our second proposed method MCHC, which estimates the arrival distribution from the detection distribution by solving a nonlinear inverse problem with a provably convergent optimization algorithm, and then the corrected histogram is used in a log-matched filter estimator. 
Although we only tested MCHC in the context of ranging, it makes no assumptions about the arrival intensity and should thus be applicable to other dead-time-limited TCSPC applications, including FLIM and NLOS imaging; we leave such extensions as future work. 

\appendices
\section{Proof of Proposition \ref{prop_markov}}
\label{app_proof_markov}
First, we show that $\{X_i\}_{i\in\mathbb{N}}$ is a Markov chain. Define $\mathcal{B}_i:=\{X_{i}\leq x_{i}\},\forall i\in\mathbb{N}$.  We need to establish that
\begin{equation}
P\left(\mathcal{B}_{i+1}\vert X_k = x_k,\forall k\leq i \right) = P\left(\mathcal{B}_{i+1}\vert X_i=x_i \right).
\label{eq:markov_property}
\end{equation} 
The following equivalence of events will be useful in the proof, as it relates the sets defined by elements of $\{X_i\}_{i\in\mathbb{N}}$ to those of $\{T_i\}_{i\in\mathbb{N}}$, which has a known transition density \eqref{eq:con_density_T}:
\begin{equation}
\begin{split}
\{ X_{i+1}\leq x_{i+1}\} &= \cup_{k=0}^{\infty} \{k\tr < T_{i+1} \leq k\tr + x_{i+1}\}, \\
\{T_i = k_i \tr + x_i\}&=\{K_i= k_i \}\cap \{X_i = x_i\}.
\end{split}
\label{eq:id_X_T}
\end{equation}
Let $\td=\kd\tr + \xd$, where $\kd=\lfloor\td/\tr \rfloor$ and $\xd=\td\mod \tr$. Moreover, define $\mathcal{A}_k:= \{k\tr < T_{i+1} \leq k\tr + x_{i+1}\}, \forall k\in \mathbb{N}\cup\{0\}$. Then we have 
\begin{align}
&P\left(\mathcal{B}_{i+1}  \vert X_j = x_j,\forall j\leq i \right)\nonumber\\
&=\sum_{k_1,...,k_i} \Bigg( P\left( \mathcal{B}_{i+1}  \vert X_j = x_j, K_j=k_j, \forall j\leq i \right)\nonumber\\
&\qquad\qquad\cdot P(K_{j}=k_j,\forall j\leq i \vert X_j=x_j,\forall j\leq i)\Bigg)\nonumber\\
&= \sum_{k_1,...,k_i}\Bigg( P\left(\cup_{k=0}^{\infty}\mathcal{A}_k \vert T_j=k_j \tr + x_j,\forall j\leq i\right)\nonumber\\
&\qquad\qquad \cdot P(K_{j}=k_j,\forall j\leq i \vert X_j=x_j,\forall j\leq i)\Bigg)\label{eq:markov1}
\end{align}
where the summation is over all $0\leq k_1\leq \cdots \leq k_i < \infty$ and
the last equality follows by \eqref{eq:id_X_T}.
In the following, we show that the first probability in \eqref{eq:markov1} only depends on $x_i$ and $\xd$.
\begin{align}
&P\left(\cup_{k=0}^{\infty}\mathcal{A}_k \vert T_j=k_j \tr + x_j,\forall j\leq i\right)\nonumber\\
&\overset{(a)}{=}\sum_{k=0}^{\infty} P\left(\mathcal{A}_k \vert T_i=k_i \tr + x_i\right)\nonumber\\
&\overset{(b)}{=} \int_{(k_i+\kd)\tr + x_i + x_d}^{(k_i+\kd)\tr + x_{i+1}} \!\!\lambda(t) \exp\left(-\int_{(k_i+\kd)\tr + x_i+\xd}^{t} \lambda(\tau)\d \tau\right)\d t\nonumber\\
&+\!\!\!\sum_{k=k_i+\kd+1}^{\infty} \int_{k\tr}^{k\tr + x_{i+1}} \!\!\!\!\lambda(t) \exp\left(-\int_{(k_i+\kd)\tr + x_i+\xd}^{t} \!\!\!\lambda(\tau)\d \tau\right)\d t,
\label{eq:S1S2}
\end{align}
where step $(a)$ follows by the Markov property of $\{T_i\}_{i=1}^{\infty}$ and $\{\mathcal{A}_k\}$ being disjoint and in step $(b)$, we have plugged in \eqref{eq:con_density_T} and assumed that $x_i + \xd \leq x_{i+1}\leq \tr$.
Note that other relationships between $x_i$, $x_{i+1}$, $\xd$, $\tr$ may lead to slightly different expression, but the derivation follows similarly. (We will see that the expression does not depend on $\kd$.)
Label the two terms in \eqref{eq:S1S2} as $S_1$ and $S_2$. First, consider $S_1$:
\begin{align*}
S_1 &= \int_{x_i + \xd}^{x_{i+1}} \lambda(t)\exp\!\left(-\int_{k_i \tr + x_i + \xd}^{k_i \tr + t} \lambda(\tau)\d\tau\right) \d t\\
&= \int_{x_i + \xd}^{x_{i+1}} \lambda(t)\exp\!\left(-\int_{x_i + \xd}^{ t} \lambda(\tau)\d\tau\right) \d t ,
\end{align*} 
which follows by change of variable and  $\lambda(t+k_i \tr) = \lambda(t)$.
Next consider $S_2$:
\begin{align*}
S_2 &=\!\!\!\!\!\!\sum_{k=k_i+\kd+1}^{\infty} \int_{0}^{x_{i+1}} \lambda(t)\exp\!\left(-\int_{(k_i+\kd) \tr + x_i + \xd}^{k \tr + t} \lambda(\tau)\d\tau\right)\d t\\
&=\sum_{k=0}^{\infty} \int_{0}^{x_{i+1}} \lambda(t)\exp\!\left(-\int_{(k_i+\kd) \tr + x_i + \xd}^{(k+k_i+\kd+1)\tr + t} \lambda(\tau) \, \d\tau\right)\d t\\
&=\sum_{k=0}^{\infty} \left(\exp(-\Lambda)\right)^k \int_{0}^{x_{i+1}} \lambda(t)\exp\!\left(-\int_{x_i + \xd}^{\tr + t} \lambda(\tau) \, \d\tau\right)\d t\\
&= \frac{\int_{0}^{x_{i+1}} \lambda(t)\exp\!\left(-\int_{x_i + \xd}^{\tr + t} \lambda(\tau) \, \d\tau\right)\d t}{1-\exp(-\Lambda)} .
\end{align*}
Notice that neither $S_1$ nor $S_2$ depends on $\kd$, $\{K_j\}_{j\leq i}$, or $\{X_j\}_{j<i}$. 
Plugging $S_1$ and $S_2$ back into \eqref{eq:markov1}, we have that
\begin{align*}
&P(\mathcal{B}_{i+1}\vert X_j=x_j,\forall j\leq i)\\
&=\sum_{k_1,\ldots,k_i}(S_1 + S_2)P(K_{j}=k_j,\forall j\leq i \vert X_j=x_j,\forall j\leq i)\\
&=S_1 + S_2,
\end{align*}
where the last equality holds since $P(\cdot\vert X_j=x_j,\forall j\leq i)$ is a probability measure and that the summation $\sum_{k_1,\ldots,k_i}$ is over all $0\leq k_1\leq \ldots\leq k_i<\infty$.
Hence, we have established \eqref{eq:markov_property}, and therefore proved that $\{X_i\}_{i\in\mathbb{N}}$ is a Markov chain. 

Next, we compute the transition PDF to justify \eqref{eq:cond_pdf}:
\begin{align*}
&f_{X_{i+1}|X_i}(x_{i+1}|x_i)= \frac{\mathsf{d}}{\mathsf{d}x_{i+1}} P(X_{i+1}\leq x_{i+1}\vert X_{i}= x_i)\\
& = \frac{\mathsf{d}}{\mathsf{d}x_{i+1}} S_1 + \frac{\mathsf{d}}{\mathsf{d}x_{i+1}} S_2 = \frac{\lambda(x_{i+1})\exp\!\left(-\int_{x_i + \xd}^{ x_{i+1}} \lambda(\tau) \, \d\tau \right)}{1-\exp(-\Lambda)}.
\end{align*}
Recall that we have assumed $x_i + \xd \leq x_{i+1}\leq \tr$ in the derivation above, and we can check that it matches \eqref{eq:cond_pdf} for this case. Other cases can be derived similarly.

\section{Derivation of Fisher Information}
\label{app_fisher_info}
We present the derivation for $\FID$; the derivation for $\FIA$ follows similarly. By definition of Fisher information:
\begin{align*}
&\FID = \int_{0}^{\tr} \left(- \frac{\partial^2}{\partial z^2} \log \left(f_{\XD}(x;z)\right)\right) f_{\XD}(x;z) \, \d x\\
& = \int_{0}^{\tr}  \frac{\left(\frac{\partial}{\partial z}f_{\XD}(x;z)\right)^2 }{f_{\XD}(x;z)} \,\d x - \int_{0}^{\tr} \frac{\partial^2}{\partial z^2} f_{\XD}(x;z) \,\d x\\
& \overset{(a)}{=} \int_{0}^{\tr}  \frac{\left(\frac{\partial}{\partial z}f_{\XD}(x;z)\right)^2 }{f_{\XD}(x;z)} \,\d x - \frac{\partial^2}{\partial z^2}\left(\int_{0}^{\tr}  f_{\XD}(x;z) \,\d x\right)\\
&=\int_{0}^{\tr}  \frac{\left(\frac{\partial}{\partial z}f_{\XD}(x;z)\right)^2 }{f_{\XD}(x;z)} \,\d x,
\end{align*}
where the interchange of derivative and integral in step $(a)$ holds trivially, since the range of the integral is finite.

\section{Proof of Proposition \ref{prop:K_iid}}
\label{app_proof_K_iid}

In the following, we will show that
\begin{equation*}
    P\Big(\!R_i=r_i,R_{i-1}=r_{i-1}\!\Big)\! =\! \Big(1-\exp(-\Lambda)\Big)^2\!\!\!\prod_{j=i-1}^i\!\!\exp(-r_{j}\Lambda),
\end{equation*}
which would imply that Proposition~\ref{prop:K_iid} is valid for $R_i$ and $R_{i-1}$; the proof for more than two $R_i$'s follows similarly. 

Define event $\mathcal{E}_j$ for $j=i-1,i$ as
\begin{equation*}
    \mathcal{E}_j:=\Big\{r_j\tr+T_{j}+\td\leq T_{j+1}<(r_j+1)\tr+T_{j}+\td\Big\}.
\end{equation*}
By definition of $R_i$ in \eqref{eq:K_def},
\begin{align*}
    P\Big(R_{i}=r_{i},R_{i-1}=r_{i-1}\Big)&=P\Big(\mathcal{E}_{i}\cap\mathcal{E}_{i-1}\Big)\\
    &=\mathbb{E}\Big[P\Big(\mathcal{E}_{i}\cap\mathcal{E}_{i-1}\,\Big\vert\, T_{i-1}\Big)\Big].
\end{align*}
Note that by the Markov property of absolute detection times as discussed in Section~\ref{subsec:markov},  the joint PDF of $T_{i+1},T_i$ given $T_{i-1}=t_{i-1}$ is
\begin{align*}
    f_{T_{i+1},T_{i}|T_{i-1}}(t_{i+1},t_i|t_{i-1})=f_{T_{i+1}|T_i}(t_{i+1}|t_i)f_{T_{i}|T_{i-1}}(t_{t}|t_{i-1}).
\end{align*}
Let $a_j:=r_j\tr+t_j+\td$ for $j=i-1,i$. Then
\begin{align*}
    &P\Big(\mathcal{E}_1,\mathcal{E}_2\,\Big\vert\,T_{i-1}=t_{i-1}\Big)\\
    &= \int_{a_{i-1}}^{\tr+a_{i-1}} \int_{a_i}^{\tr+a_i}\!\! f_{T_{i+1}|T_i}(t_{i+1}|t_i)f_{T_{i}|T_{i-1}}(t_{i}|t_{i-1}) \,\d t_{i+1} \,\d t_{i}.
\end{align*}
First consider the inner integral:
\begin{align*}
    &\int_{a_i}^{\tr+a_i} f_{T_{i+1}|T_i}(t_{i+1}|t_i) \,\d t_{i+1}\\
    &=\int_{a_i}^{\tr+a_i} \lambda(t_{i+1})\exp\Big(-\int_{t_i+\td}^{t_{i+1}}\lambda(\tau)\,\d\tau\Big)\,\d t_{i+1}\\
    &= -\exp\Big(-\int_{t_i+\td}^{t_{i+1}}\lambda(\tau)\,\d\tau\Big)\,\Big\vert_{t_{i+1}=a_i}^{t_{i+1}=\tr+a_i}\\
    &=(1-\exp(-\Lambda))\exp(-r_i\Lambda).
\end{align*}
Note that the inner integral does not depend on $t_i$. Using similar calculation, we have that the outer integral does not depend on $t_{i-1}$. Hence, 
\begin{align*}
    &P\Big(\mathcal{E}_1\cap\mathcal{E}_2\,\Big\vert\, T_{i-1}=t_{i-1}\Big) =P\Big(\mathcal{E}_1\cap\mathcal{E}_2\Big) \\
    &\qquad\qquad=\Big(1-\exp(-\Lambda)\Big)^2\prod_{j=i-1}^i\exp(-r_{j}\Lambda),
\end{align*}
which is the desired result.

\section{Proof of Proposition \ref{prop_converge}}
\label{app_proof_converge}

In the following, we will find an upper bound for the Lipschitz constant $L$ of $\grad D(\cdot)$ defined in \eqref{eq:grad_nonlinear}.
For brevity, we omit the dependence on $\hbf$ in the notation for $\mathcal{T}$. By \eqref{eq:grad_nonlinear}, we have
\begin{equation*}
\begin{split}
\Lambda \nabla D(\lambf) &= \gbf\lambf^T \mathcal{T}(\lambf) + \mathcal{T}(\lambf) + \gbf^T\lambf \mathcal{T}(\lambf) - \Lambda\,\diag{\gbf}\mathcal{T}(\lambf)\\
& \quad - \gbf\lambf^T \hbf - \hbf - \gbf^T\lambf \hbf + \Lambda\,\diag{\gbf}\hbf.
\end{split}
\end{equation*}
It follows that for any $\ubf,\vbf\in [0,M]^\nb$, we have by triangle inequality that
\begin{align*}
\Lambda & \|\nabla D(\ubf) - \nabla D(\vbf)\| \\
& \leq \|\gbf\ubf^T \mathcal{T}(\ubf)- \gbf\vbf^T \mathcal{T}(\vbf)\|
  + \|\mathcal{T}(\ubf)-\mathcal{T}(\vbf)\| \\
& + \|\gbf^T\ubf \mathcal{T}(\ubf)-\gbf^T\vbf \mathcal{T}(\vbf)\|\\
& + \Lambda\|\diag{\gbf}\mathcal{T}(\ubf)-\diag{\gbf}\mathcal{T}(\vbf)\| \\
& + \|\gbf\ubf^T \hbf - \gbf\vbf^T \hbf\| 
  + \|\gbf^T\ubf \hbf - \gbf^T\vbf \hbf\|.
\end{align*}
Label the six terms on the right hand side as $T_1,\ldots,T_6$.
We will show that there exist constants $L_1,\ldots, L_6<\infty$ such that $T_i\leq L_i\|\ubf-\vbf\|, \forall i=1,\ldots, 6$.
Then the Lipschitz constant $L$ of the gradient $\nabla D$ is upper bounded by $\Lambda^{-1}\sum_{i=1}^6 L_i$.

First consider $T_2$. Let $\hat{g}:=\max_{i\in[\nb]}g_i$. Then
\begin{align*}
&T_2 \overset{(a)}{\leq} \|\gbf^T\ubf\ubf - \gbf^T\vbf\vbf\| + \|\ubf-\vbf\| + \|\diag{\gbf}(\ubf - \vbf)\|\\
&\overset{(b)}{\leq} \|\gbf^T\ubf\ubf - \gbf^T\ubf\vbf\|+ \|\gbf^T\ubf\vbf - \gbf^T\vbf\vbf\| + (1+\hat{g})\|\ubf-\vbf\|\\
&\overset{(c)}{\leq} \|\gbf\|\|\ubf\|\|\ubf-\vbf\| + \|\gbf\|\|\vbf\|\|\ubf-\vbf\|+(1+\hat{g}) \|\ubf-\vbf\|\\
&\overset{(d)}{\leq} 2\sqrt{\nb}M \|\ubf-\vbf\| + 2\|\ubf-\vbf\| = 2(\sqrt{\nb}M + 1)\|\ubf-\vbf\|,
\end{align*}
where step $(a)$ follows by triangle inequality, step $(b)$ follows by triangle inequality and the fact that the largest eigenvalue of a diagonal matrix equals to the largest entry on its diagonal, and step $(c)$ follows by Cauchy--Schwarz. To see step $(d)$, notice that $\|\ubf\|,\|\vbf\| \leq \sqrt{\nb}M$, since $\ubf,\vbf \in [0,M]^\nb$ and $\max_{i\in[\nb]}g_i\leq\|\gbf\|\leq \|\gbf\|_1 \leq \|\hbf\|_1=1$ (the second inequality follows by the fact that $\gbf$ is non-negative and so $\|\gbf\|=\sqrt{\sum_{i=1}^\nd g_i^2} \leq \sqrt{(\sum_{i=1}^\nd g_i)^2}=\|\gbf\|_1$, the third inequality assumed $\td\leq \tr$ and the last equality follows by $\hbf$ being a proper probability density function). Similarly, we can show that 
$T_4 \leq 2\Lambda(\sqrt{\nb}M + 1)\|\ubf-\vbf\|$, $T_5 \leq \|\ubf-\vbf\|$, and $T_6\leq \|\ubf-\vbf\|$.

Next consider $T_1$:
\begin{align*}
T_1 & \overset{(a)}{\leq}\left\vert \ubf^T \mathcal{T}(\ubf)-\vbf^T \mathcal{T}(\vbf)\right\vert\\
& \overset{(b)}{\leq} \left\vert \ubf^T \mathcal{T}(\ubf)-\vbf^T \mathcal{T}(\ubf)\right\vert + \left\vert \vbf^T \mathcal{T}(\ubf)-\vbf^T \mathcal{T}(\vbf)\right\vert\\
&\overset{(c)}{\leq} \|\mathcal{T}(\ubf)\|\|\ubf-\vbf\| + \|\vbf\|\|\mathcal{T}(\ubf)-\mathcal{T}(\vbf)\|\\
&\overset{(d)}{\leq} \left(\Lambda^{-1}nB^2 + \left(\Lambda^{-1}+2\right)\sqrt{\nd}M\right)\|\ubf - \vbf\|, 
\end{align*}
where step $(a)$ follows by $\|\gbf\|\leq 1$ as established before, step $(b)$ follows by triangle inequality, step $(c)$ follows by Cauchy--Schwarz, and step $(d)$ follows by
\begin{align*}
\|\mathcal{T}(\ubf)\| &= \|\Lambda^{-1}\gbf^T\ubf\ubf + \Lambda^{-1}\ubf - \diag{\gbf}\ubf\|\\
&\leq \Lambda^{-1}\|\gbf\|\|\ubf\|^2 + \Lambda^{-1}\|\ubf\| + \hat{g} \|\ubf\|\nonumber\\
&\leq \Lambda^{-1}\nb M^2 + \Lambda^{-1}\sqrt{\nb}M + \sqrt{\nb}M.
\end{align*}
Similarly, we can show that
\begin{equation*}
T_3\leq \left(\Lambda^{-1}\nb M^2 + \left(\Lambda^{-1} + 2\right)\sqrt{\nb}M\right)\|\ubf-\vbf\|.
\end{equation*}
Proposition \ref{prop_converge} is then obtained by combining the upper bounds for $T_1$ through $T_6$.

\section*{Acknowledgment}
The authors would like to thank Yue M. Lu for suggesting the initialization scheme in~\eqref{eq:init_lam1}.
Computing resources provided by Boston University's Research Computing Services are also gratefully appreciated.

\bibliographystyle{IEEEtran}
\bibliography{ref}
\end{document}